\newcommand{\eps}{\varepsilon}
\newcommand*{\opt}{\textsc{Opt}}
\newcommand*{\alg}{\textsc{Alg}}
\newcommand{\N}{\mathbb{N}}
\newcommand{\length}{\ell}
\newcommand{\cost}{c}
\newcommand{\Length}{L}
\newcommand{\obj}{total latency}
\renewcommand{\P}{\mathsf{P}}
\newcommand{\NP}{\mathsf{NP}}
\tikzstyle{state}=
\definecolor{myyellow}{cmyk}{0, 0.45, 0.90, 0.17}
\newcommand{\myblue}{MidnightBlue}
\newcommand{\myyellow}{myyellow}
\newcommand{\mygreen}{LimeGreen}
\newcommand{\myred}{Red}
\crefname{hypothesis}{Hypothesis}{Hypotheses}
\title{Improved Approximation Algorithms\\ for the Expanding Search Problem\thanks{
The results in this paper have been published in preliminary form in the Proceedings of the 31st Annual European Symposium on Algorithms (ESA)~\cite{griesbach2024improved}.
\funding{This work was supported by
Deutsche Forschungsgemeinschaft under Germany’s Excellence Strategy, Berlin Mathematics Research Center (grant EXC-2046/1, Project 390685689), 
Centro de Modelamiento Matemático (CMM) BASAL fund FB210005 for center of excellence from ANID-Chile,
HYPATIA.SCIENCE (Department of Mathematics and Computer Science, University of Cologne,
and in part by the Independent Research Fund Denmark, Natural Sciences (grant DFF-0135-00018B).}}}
\author{Svenja~M.~Griesbach\thanks{Centro de Modelamiento Matemático (CNRS IRL2807), Universidad de Chile, Santiago, Chile (\email{sgriesbach@cmm.uchile.cl},
\url{https://sites.google.com/view/svenja-m-griesbach}).} 
\and Felix~Hommelsheim%
\thanks{Universität Bremen, Faculty of Mathematics and Computer Science 
  (\email{fhommels@uni-bremen.de}, \url{https://www.uni-bremen.de/en/cslog/team/felix-hommelsheim}).}
\and Max~Klimm%
\thanks{Technische Universität Berlin, Institute for Mathematics 
  (\email{klimm@tu-berlin.de}, \url{https://tu.berlin/disco/klimm}).}
\and Kevin~Schewior\thanks{University of Southern Denmark, Department of Mathematics and Computer Science (\email{kevs@sdu.dk}, \url{https://sites.google.com/view/kschewior/})}.}
\begin{document}

\maketitle

\begin{abstract}
A searcher is tasked with exploring a graph with edge lengths and vertex weights, starting from a designated vertex. Initially, only the starting vertex is considered explored. At each step, the searcher adds an edge to the solution, connecting an unexplored vertex to an explored one. The time required to add an edge equals its length. The objective is to minimize the weighted sum of exploration times for all vertices.

We demonstrate that this problem is hard to approximate and present algorithms with improved approximation guarantees. Specifically, we provide a~$(2\mathrm{e} + \varepsilon)$-approximation for any~$\varepsilon > 0$ for the general case.
On instances where the vertex weights are binary, we achieve a~$2\mathrm{e}$-approximation. Finally,  we develop a polynomial-time approximation scheme (PTAS) for Euclidean graphs. 
Previously, only an~$8$-approximation was known for all these cases.
\end{abstract}

\begin{keywords}
expanding search problem, approximation algorithm, hardness of approximation, Euclidean graph, traveling repairperson problem, minimum latency problem
\end{keywords}

\begin{MSCcodes}
68W25, 68R12
\end{MSCcodes}

\section{Introduction}

A critical challenge faced by disaster-relief teams deployed to regions devastated by natural or human-made catastrophes is determining where to search for buried or isolated people. The fundamental issues behind these decisions are that, in emergencies, technical means for probing and clearing areas are often limited, there is incomplete knowledge regarding the exact locations of potential survivors, and rescue operations are time-sensitive since the chances of survival decrease with the time taken for rescue. 

Following Averbakh and Pereira~\cite{averbakh2012flowtime},  this problem is modeled using an undirected graph with edge lengths. The graph's vertices represent different locations within the disaster area, and the edges represent possible connections between these locations.
The length of an edge corresponds to the time required to clear the connection, which could involve removing rubble from a road, defusing explosives, or digging through snow, dirt, or debris.
A single rescue team starts at a designated root vertex. Based on prior knowledge, the rescue team is aware of the number of survivors at each location, which is represented by the vertex weights. The objective is to minimize the average time it takes for the team to reach a survivor. 

A solution to the problem is represented by a sequence of edges to clear until all vertices with non-zero weight can be reached. Once an edge is cleared, the rescue team can travel along it in negligible time, so we only consider the time required to clear the edges. This type of problem is referred to as an \emph{expanding search problem}, where the set of vertices accessible by the rescue team expands with each step. This contrasts with \emph{pathwise search problems}, where the movement of the searcher is modeled in such a way that traversing an edge always takes time equal to the edge length, regardless of whether it is the first traversal or not.

Expanding search problems are particularly suitable when the time required to traverse an edge for the first time is significantly higher than the time needed for subsequent traversals, allowing the time for later movements to be neglected.
Beyond rescue operations (Averbakh and Pereira~\cite{averbakh2012flowtime}), such problems also arise in mining where the time to dig a new tunnel is much higher than the time to traverse already-dug tunnels to previously explored locations (Alpern and Lidbetter~\cite{alpern2013mining}), and when securing an area from hidden explosives where the time to move within a safe region is negligible compared to the time required to secure a new area (Angelopoulos et al.~\cite{angelopoulos2019expanding}).

\subsection{Our Contribution}
In this work, we provide polynomial-time approximation al\-go\-rithms with improved approximation guarantees for the expanding search problem.
Specifically, we give an approximation algorithm for arbitrary vertex weights with an approximation guarantee of~$2\mathrm{e}+\eps$ for any~$\eps > 0$ where~$2\mathrm{e} \approx 5.44$ (\Cref{thm:weighted}).
To this end, we first devise an approximation algorithm for the unweighted case where all vertices have the same weight, for which we obtain a~$2\mathrm{e}$-approximation (Theorem~\ref{thm:unweighted}).
This algorithm is obtained by concatenating~$k$-minimum spanning trees ($k$-MSTs) for varying values of~$k$ and of exponentially increasing length.
Using the~$2$-approximation for~$k$-MST due to Garg~\cite{Garg05} causes the loss of a factor of~$2$; using the probabilistic method to find the concatenation of the right trees, we incur an additional loss of a factor of~$\mathrm{e}$.
A similar technique has been used before for pathwise search problems~\cite{chaudhuri2003paths,goemans1998improved}; we here modify it to work for expanding search problems.
To obtain the approximation algorithm for the case with arbitrary vertex weights, we adapt an approach developed by Sitters~\cite{sitters2021polynomial} to obtain a polynomial-time approximation scheme (PTAS) for the pathwise search problem in the case of a Euclidean graph. We describe this adapted approach in more detail below as we also use it to obtain a PTAS for the case of a Euclidean graph. In our context, it allows us to reduce the weighted case to the case with binary weights (i.e., weights~$0$ and~$1$) at the cost of an additional factor of~$(1+\eps)$.
The application of this approach requires a~$2\mathrm{e}$-approximation for binary weights which we obtain by a generalization of our~$2\mathrm{e}$-approximation algorithm for the unweighted case.

We then provide the aforementioned PTAS for the case of a Euclidean graph (Theorem~\ref{thm:PTAS}) where vertices correspond to points in the plane, all pairs of points are connected by an edge, and the length of that edge is equal to the Euclidean distance of their endpoints. The approach by Sitters~\cite{sitters2021polynomial} for the pathwise search problem relies on partitioning an instance into subinstances. A central challenge when adapting this approach to the expanding search setting is that, unlike pathwise search, expanding search is not memoryless, as points contained in one subinstance may be used as Steiner points in another subinstance. We address this difficulty by keeping such points in the subinstance with zero weight so that the partitions overlap; hence the case of binary weights emerges. To obtain a PTAS for the subproblem, we adapt techniques developed by Arora~\cite{Arora98} for the Euclidean travelling salesperson problem. 

Prior to our work, for all variants considered in this paper, i.e., the unweighted case, the weighted case, and the Euclidean case, the best approximation algorithm was an~$8$-approxi\-mation due to Hermans et al.~\cite{HermansLM22}. 

Finally, we show that there is no PTAS for the expanding search problem on general graphs unless~$\mathsf{P} = \mathsf{NP}$ (Theorem~\ref{thm:hardness}).
In light of our PTAS for Euclidean instances, this result implies that expanding search is considerably more difficult on general graphs than on Euclidean graphs.
The proof follows a similar idea as the hardness proof for the traveling repairperson problem suggested in~\cite{blum1994minimum}.
However, the solution needs to be more structured in our setting compared to the pathwise search. Showing this property turns out to be rather elaborate.
Previously, it was only known that the expanding search problem is~$\mathsf{NP}$-hard~\cite{averbakh2012flowtime}.

\subsection{Further Related Work}
The unweighted pathwise search problem where all vertices have unit weight is also known as the \emph{traveling repairperson problem}.
Sahni and Gonzales~\cite{sahni1976complete} showed that the problem cannot be efficiently approximated within a constant factor unless~$\mathsf{P}=\mathsf{NP}$ on complete non-metric graphs when the searcher is required to take a Hamiltonian tour.
Afrati et al.~\cite{afrati1986complexity} considered the problem in metric spaces and gave an exact algorithm with quadratic runtime when the metric is induced by a path. This can be improved to linear runtime as shown by Garc\'ia et al.~\cite{garcia2002note}.
Minieka~\cite{minieka1989delivery} proposed an exact polynomial-time algorithm for the case that the metric is induced by an unweighted tree.
Sitters~\cite{sitters2002minimum} showed that the problem is~$\mathsf{NP}$-hard when the metric is induced by a tree with edge lengths~$0$ and~$1$.

The first approximation algorithm of the metric traveling repairperson problem is due to Blum et al.~\cite{blum1994minimum}, who gave a~$144$-approximation.
After a series of improvements~\cite{archer2010improved,archer2008faster,AroraK03,goemans1998improved,koutsoupias1996searching}, the currently best factor is a~$3.59$-approximation for general metrics~\cite{chaudhuri2003paths}, and a polynomial-time approximation scheme (PTAS) for trees~\cite{sitters2021polynomial} and Euclidean graphs~\cite{sitters2021polynomial}. 
Further variants of the problem have been studied both in terms of exact solution methods and in terms of competitive algorithms, e.g., variants with directed edges~\cite{fischetti1993delivery,friggstad2013asymmetric,nagarajan2008directed}, variants with processing times and time windows~\cite{tsitsiklis1992special}, variants with profits at vertices~\cite{dewilde2013heuristics}, variants with multiple searchers~\cite{chekuri2004maximum,fakcharoenphol2007traveling,li2018multiple}, and online variants~\cite{krumke2003news}.
The vertex-weighted version of the problem is often referred to as \emph{the} pathwise search problem. 
It has been shown to be~$\mathsf{NP}$-hard in metric graphs by Trummel and Weisinger~\cite{trummel1986complexity} and was further studied by Koutsoupias et al.~\cite{koutsoupias1996searching}. The approximation schemes by Sitters~\cite{sitters2021polynomial} also apply to the weighted case.

The expanding search problem has received considerably less attention in the literature than the pathwise problem. It has been shown to be~$\mathsf{NP}$-hard by Averbakh and Pereira~\cite{averbakh2012flowtime}. Alpern and Lidbetter~\cite{alpern2013mining} introduced an exact polynomial-time algorithm for the case when the graph is a tree and gave heuristics for general graphs. 
Averbakh et al.~\cite{averbakh2012emergency} considered a generalization of the problem with multiple searchers when the underlying graph is a path; Tan et al.~\cite{tan2019scheduling} considered multiple searchers in a tree network.
Hermans et al.~\cite{HermansLM22} devised an~$8$-approximation algorithm that is based on the exact algorithm for trees~\cite{alpern2013mining}.
Angelopoulos et al.~\cite{angelopoulos2019expanding} studied the expanding search ratio of a graph defined as the minimum over all expanding searches of the maximum ratio of the time to reach a vertex by the search algorithm and the time to reach the same vertex by a shortest path.
They showed that this ratio is~$\mathsf{NP}$-hard to compute and gave a search strategy that achieves a~$(4\ln 4)$-approximation of the optimum. 

The pathwise and expanding search problems also appear naturally as strategies for the seeker in a two-player zero-sum game between a hider and a seeker, where the hider chooses a vertex that maximizes the expected search time. In contrast, the seeker aims to minimize the search time. 
Gal~\cite{gal1979search} computed the value (i.e., the unique search time in an equilibrium of the game) of the pathwise search game on a tree.
Alpern and Lidbetter~\cite{alpern2013mining} computed this value for the expanding search game; see also~\cite{alpern2019approximate} for approximations of this value for general graphs. For more details on search games, we refer to~\cite{alpern2013search,alpern2006theory,gal1980search,isaacs1965differential,kirkpatrick2009hyperbolic}.

\section{Preliminaries}
\label{sec:prelim}
We define~$\N = \{0, 1, 2, \dots\}$ and use~$\N_{>0}$ to refer to~$\N \setminus \{0\}$.
For a natural number~$n \in \N$, we write~$[n] = \{1, 2, \dots, n\}$.

In the expanding search problem, we are given a connected, undirected graph~$G = (V, E)$ with~$|V| = n$ and a designated root vertex~$r \in V$.
Further, we are given for each vertex~$v \in V$ has a weight~$w_v \in \N$.
We denote by~$V^* \subseteq V$ the set of vertices with~$w_v > 0$.
For a subgraph~$H$ of~$G$ we define~$w(H) = \sum_{v \in V(H)} w(v)$ to be the sum of weights of vertices in~$H$.
Finally, for each edge~$e \in E$ we are given a length~$\length_e \in \N$.

We consider an agent initially located at the root~$r$, who performs an \emph{expanding search pattern}~$\sigma = (e_1, \dots, e_m)$, where~$m \leq n - 1$, satisfying the following properties:
\begin{enumerate}[leftmargin=*]
    \item The root~$r$ is incident to~$e_1$.
    \item For all~$i \in \{1, \dots, m\}$, the set~$\{e_1, \dots, e_i\}$ forms a tree in~$G$.
\end{enumerate}

For a vertex~$v \in V^* \setminus \{r\}$, let
$$
k_v(\sigma) = \inf \bigl\{i \in [m] : v \in e_i \bigr\}
$$
be the index of the first edge in~$\sigma$ that reaches~$v$, and set~$k_r(\sigma) = 0$.
We then define the \emph{latency} of a vertex~$v \in V^*$ under the expanding search pattern~$\sigma$ as
$$
\Length_v(\sigma) = \sum_{i=1}^{k_v(\sigma)} \length_{e_i}\ .
$$
Given the graph~$G$ with vertex weights and edge lengths, in the expanding search problem, the goal is to find an expanding search pattern~$\sigma$ that minimizes the \emph{\obj}%
, defined as
$$
\Length(\sigma) = \sum_{v \in V^*} w_v \Length_v(\sigma)\ .
$$
Note that vertices~$v$ with~$w_v = 0$ do not contribute to the objective function and, therefore, do not need to be visited.
However, they may still be used as Steiner vertices in the constructed search trees and cannot be contracted, as is possible in the pathwise search problem.
When the pattern~$\sigma$ is clear from the context, we omit the explicit dependence on~$\sigma$ and simply write~$\Length$,~$\Length_v$, and~$k_v$.
The \emph{length}~$\ell (\sigma)$ of a search pattern~$\sigma$ is defined as the total sum of the edge lengths
$$
\ell (\sigma ) = \sum_{e\in \sigma}\ell_e \ .
$$
Finally, for two expanding search patterns~$\sigma=(e_1,\dots,e_m),\sigma'=(e'_1,\dots,e'_{m'})$, we define their concatenation~$\sigma + \sigma'$ as the subsequence of~$(e_1, \dots, e_m, e'_1, \dots, e'_{m'})$ where any edge that closes a cycle with previous edges in the sequence is excluded.

Further, we also define Euclidean graphs:
A~$d$-dimensional Euclidean graph is a complete graph where vertices represent points in a~$d$-dimensional Euclidean space, and the edge lengths are the Euclidean distances between the vertices. For Euclidean instances, we assume that instead of the edge lengths, the coordinates of the vertices are given.

\section{The Unweighted Case}
\label{app:unweighted}

We first give an approximation algorithm for the special case where all vertex weights are equal to~$1$, i.e., we show the following result.

\begin{theorem}\label{thm:unweighted}
There is a polynomial-time~$2\mathrm{e}$-approximation algorithm for the unweighted expanding search problem.
\end{theorem}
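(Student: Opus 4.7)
The plan is to reduce to the rooted $k$-minimum spanning tree ($k$-MST) problem, for which a $2$-approximation is known, and to concatenate such trees of geometrically increasing length with a random offset; the randomization is what saves the factor of $e$ via the probabilistic method. Fix a base $\alpha > 1$ (to be optimized later) and draw a random shift $X \sim U[0,1]$. For every $k \in \{1,\dots,n\}$, precompute a rooted $k$-MST $2$-approximation $T_k$ with $\ell(T_k) \le 2\ell^*_k$, where $\ell^*_k$ denotes the optimal $k$-MST length. For each level $i = 0, 1, 2, \dots$ (up to a value large enough that $T^{(i)}$ spans $V$), set $T^{(i)} := T_{k_i}$, where $k_i$ is the largest $k$ with $\ell(T_k) \le 2\alpha^{i+X}$. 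The algorithm outputs the concatenation $\sigma := T^{(0)} + T^{(1)} + \cdots$, with cycles dropped as in the preliminaries.

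For the analysis, order $\opt$'s vertices by $\opt$-latency as $c_1 \le c_2 \le \cdots$, so $\opt = \sum_m c_m$. Fix any level $i$: the vertices $S_i := \{v : c_v \le \alpha^{i+X}\}$ are covered in $\opt$ by a subtree of length at most $\alpha^{i+X}$, so the optimal $|S_i|$-MST has length at most $\alpha^{i+X}$, and the $2$-approximation thus yields a tree $T_{|S_i|}$ of length at most $2\alpha^{i+X}$. By the maximality in the definition of $k_i$, $T^{(i)}$ therefore covers at least $|S_i|$ vertices. Ordering $\sigma$'s vertices by algorithmic latency as $u_1, u_2, \dots$, we conclude that $u_m$ is explored by the end of tree $T^{(j(m))}$ where $j(m) := \min\{j : \alpha^{j+X} \ge c_m\}$, and hence
\[
\Length_{u_m}(\sigma) \le \sum_{i=0}^{j(m)} \ell(T^{(i)}) \le 2\sum_{i=0}^{j(m)} \alpha^{i+X} < \frac{2\alpha}{\alpha-1}\,\alpha^{j(m)+X}.
\]
Setting $U := \{\log_\alpha c_m - X\}$, which is uniform on $[0,1]$ since $X$ is, a direct calculation yields $\alpha^{j(m)+X} = c_m\,\alpha^{1-U}$, so
\[
\mathbb{E}\bigl[\Length_{u_m}(\sigma)\bigr] < \frac{2\alpha}{\alpha-1}\, c_m \int_0^1 \alpha^{1-u}\,du = \frac{2\alpha}{\ln \alpha}\, c_m.
\]
Summing over $m$ and optimizing $\alpha/\ln\alpha$ (minimized at $\alpha = e$ with value $e$) gives $\mathbb{E}[\Length(\sigma)] \le 2e \cdot \opt$.

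It remains to derandomize. As $X$ sweeps $[0,1]$, each level's selected $k_i$ changes only at the finitely many breakpoints $X = \log_\alpha(\ell(T_k)/2) - i$ for $k \in \{1,\dots,n\}$; since $i$ ranges over a polynomially bounded set (say until $\alpha^{i+X}$ exceeds $n \max_e \ell_e$), there are only polynomially many candidate patterns, and we enumerate them and return the best. The main subtlety I anticipate is the charging step: $T^{(i)}$ does not need to cover the \emph{same} vertices as $\opt$ does by time $\alpha^{i+X}$, only at least as \emph{many}. This suffices because we charge $\sigma$'s $m$-th vertex (by algorithmic latency) against $\opt$'s $m$-th vertex (by $\opt$-latency), after which the bound follows from a geometric sum and the randomization calculation in the style of the pathwise algorithms of \cite{chaudhuri2003paths,goemans1998improved}.
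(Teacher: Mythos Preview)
Your proposal is correct and follows essentially the same approach as the paper: concatenate $2$-approximate rooted $k$-MSTs selected by geometrically increasing length thresholds with a random offset, then charge the $m$-th vertex (by algorithmic order) to the $m$-th optimal latency via the key observation that a prefix of $\sigma^*$ of length $\alpha^{i+X}$ witnesses an $|S_i|$-MST of that length. The expectation calculation giving $2\alpha/\ln\alpha$ and the optimization at $\alpha=e$ are identical to the paper's Lemma~\ref{lem:bound2e}. The only packaging difference is the derandomization: the paper embeds all possible tree sequences into an auxiliary DAG $H$ with arc costs $(n-i)\ell(T_j)$ and computes a shortest $(1,n)$-path, so that the probabilistic-method argument automatically certifies the deterministic output; you instead enumerate the polynomially many breakpoints of the random offset. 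Both are valid; the shortest-path device is slightly cleaner and also allows the algorithm to pick non-geometric sequences when they happen to be better, but this is not needed for the $2e$ bound.
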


Our algorithm is an adaptation of the approximation algorithm by Chaudhuri et al.~\cite{chaudhuri2003paths} for the pathwise search problem, where the objective is to find a path in an undirected graph with edge lengths that minimizes the sum of latencies of the vertices.

Like the approximation algorithm of Chaudhuri et al., our approach is based on approximate solutions to several~$k$-minimum spanning tree ($k$-MST) instances.
However, the way in which these approximate solutions are combined to form an approximate solution to the original problem differs.
In the rooted version of the~$k$-MST problem, we are given an unweighted connected graph~$G = (V,E)$ with a designated root vertex~$r \in V$ and non-negative edge lengths~$\length_e \in \mathbb{N}$,~$e \in E$. Let~$\mathcal{T}_k$ denote the set of all trees~$T = (V_T, E_T)$ that are subgraphs of~$G$ with~$|V_T| = k$ and~$r \in V_T$. The~$k$-MST problem is concerned with finding a tree~$T \in \mathcal{T}_k$ that minimizes~$\length(T) = \sum_{e \in E_T} \length_e$. This problem is~$\mathsf{NP}$-complete, as shown by Ravi et al.~\cite{RaviSMRR94}.

For all~$k \in [n]$, we solve this problem approximately using the~$2$-approximation algorithm by Garg~\cite{Garg05} and obtain~$n$ trees~$T_1, T_2, \dots, T_n$, where~$T_1 = (\{r\}, \emptyset)$ is the empty tree consisting only of the root vertex.
We proceed to construct an auxiliary weighted directed graph~$H = (V_H, A_H)$ with the vertex set~$V_H = [n]$ and the arc set~$A_H = \{(i, j) \in V_H^2 : i < j\}$.
The directed graph~$H$ is constructed such that any~$(1, n)$-path~$P$ corresponds to some expanding search pattern~$\sigma_P$ such that we can upper-bound~$\Length(\sigma_P)$ by~$c(P)$.
For this, a vertex~$j \in V_H$ models the exploration of tree~$T_j$. The cost~$\cost_{i,j}$ of an edge from~$i \in V_H$ to~$j \in V_H$ models the delay in the exploration of the vertices not explored in~$T_i$ due to the exploration of~$T_j$.
This way, we obtain an upper bound on the latency of the vertices by assuming that these vertices will only be explored after~$T_j$ has been fully explored.
Traversing the edge~$(i,j)$, the exploration of~$n-i$ vertices is delayed by~$\length(T_j)$; we therefore set~$\cost_{i,j} = (n-i)\length(T_j)$.
To obtain an approximate expanding search pattern, we compute a shortest~$(1,n)$-path~$P = (n_0, n_1, \dots, n_l)$ in~$H$ with~$n_0 = 1$,~$n_l = n$, and some~$l \in \mathbb{N}$.
Hence, the expanding search pattern consists of~$l$ phases.
In phase~$j \in [l]$, we explore all edges~$e \in E[T_{n_j}]$ with~$\smash{|e \cap (\bigcup_{i=0}^{j-1} V[T_{n_i}])| < 2}$ in an order such that the subgraph of explored vertices is always connected.
Here, we denote by~$E[T_{n_j}]$ and~$V[T_{n_j}]$ the edge and vertex set of tree~$T_{n_j}$, respectively.
In this way, when phase~$j$ is finished, all vertices in~$V[T_{n_j}]$ have been explored, and the total length of edges used in phase~$j$ is at most~$\length(T_{n_j})$.
Since~$n_l = n$, all vertices have been explored when the algorithm terminates.
Formally, the algorithm is given as follows:
\begin{enumerate}[label=\arabic*),leftmargin=*]
    \item For all~$k\in [n]$, solve the~$k$-MST problem with the 2-approximation algorithm of Garg~\cite{Garg05} and obtain~$n$ trees~$T_1,\dots,T_{n}$.
    \item Construct the auxiliary weighted directed graph~$H=(V_H,A_H)$ with the vertex set~$V_H = \{1,2,\dots,n\}$, the arc set~$A_H=\{(i,j) \in V_H^2 : i<j\}$, and the arc~costs~$\cost_ {i,j}  = (n-i)\length(T_j)$.
    \item Compute a shortest~$(1,n)$-path~$P = (n_0,n_1,\dots,n_l)$ with~$n_0 = 1$ and~$n_l = n$ in~$H$.
    \item For each phase~$j\in [l]$ explore all unexplored vertices of~$V[T_{n_j}]$ in any feasible order using the edge set of~$E[T_{n_j}]$. 
\end{enumerate}

Let~$\sigma_{\alg}$ be the expanding search pattern obtained from this algorithm.
Let~$v_i$ be the~\mbox{$i$-th} vertex explored according to~$\sigma_{\alg}$  and let~$j(i) \in [l]$ be chosen such that~$n_{j(i)-1} < i \leq n_{j(i)}$.
We define~$\smash{\pi(i) = \sum_{k=0}^{j(i)} \length(T_{n_k})}$.
The following lemma gives an upper bound on each vertex's latency.

\begin{lemma}
\label{lem:pi-upper-bound-vertex}
The latency of~$v_i$ in~$\sigma_\alg$ is bounded by   ~$\Length_{v_i}(\sigma_{\alg})\leq \pi(i)$.
\end{lemma}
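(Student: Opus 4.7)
The plan is to unpack the definition of $\pi(i)$ and track exactly how much edge length the algorithm has consumed by the moment it first touches $v_i$. The key observation is that each phase~$k$ of Step~4 only uses edges drawn from $E[T_{n_k}]$, so the total length accumulated during phase~$k$ is bounded by $\length(T_{n_k})$.

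More concretely, I would proceed as follows. First, I would fix $i$, set $j = j(i)$, and recall that by definition of $j(i)$, the vertex $v_i$ is explored in phase~$j$. Next, I would argue by induction on phase number (or directly, since phases are disjoint in their edge additions) that the total length of edges added by the algorithm up through the end of phase~$j$ is at most $\sum_{k=1}^{j} \length(T_{n_k})$. This uses exactly the stated property in Step~4 that each phase~$k$ only selects edges from $E[T_{n_k}]$ whose lengths therefore sum to at most $\length(T_{n_k})$. Since $v_i$ becomes explored at some moment during phase~$j$, the latency $\Length_{v_i}(\sigma_\alg)$ — which equals the total length of edges $e_1,\ldots,e_{k_{v_i}}$ — is bounded by the total length accumulated through the end of phase~$j$.

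Finally, I would observe that $\length(T_{n_0}) = \length(T_1) = 0$ because $T_1 = (\{r\},\emptyset)$ is the trivial tree on the root, so
\[
\Length_{v_i}(\sigma_\alg) \;\leq\; \sum_{k=1}^{j(i)} \length(T_{n_k}) \;=\; \sum_{k=0}^{j(i)} \length(T_{n_k}) \;=\; \pi(i),
\]
yielding the claim. There is essentially no obstacle here: the statement is a direct bookkeeping consequence of how the phases are defined, and the only subtle point worth spelling out is why phase~$j$ really can be executed within length $\length(T_{n_j})$, which follows from the fact that the set of already-explored vertices is connected at all times and $T_{n_j}$ is a tree containing the root, so one can greedily attach any unexplored vertex of $V[T_{n_j}]$ via an edge of $T_{n_j}$.
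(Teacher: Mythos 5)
Your proof is correct and follows essentially the same route as the paper's: both arguments bound the length consumed in each phase $k$ by $\length(T_{n_k})$ (since only edges of $E[T_{n_k}]$ are used) and then sum over all phases up to $j(i)$, the phase by whose end $v_i$ is guaranteed to be explored. Your explicit remark that $\length(T_{n_0})=\length(T_1)=0$ reconciles the summation index with $\pi(i)$ and is a harmless, slightly more careful bookkeeping step than the paper's.
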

\begin{proof}
    Since~$n_{j(i)-1} < i \leq n_{j(i)}$, vertex~$v_i$ will be explored in or before phase~$j(i)$. To give an upper bound on the latency~$\Length_{v_i}(\sigma_{\alg})$ of~$v_i$ in the expanding search pattern~$\sigma_{\alg}$, note that~$v_i$ is explored at the latest if all trees~$T_{n_k}$,~$1 \leq k \leq j(i)$, are nested by inclusion and~$v_i$ is visited at the very end of the exploration of~$T_{n_{j(i)}}$.
    Also, note that the total length of added edges in any phase~$k$ is at most~$\ell(T_{n_{k}})$.
    Thus, the latency of vertex~$v_i$ in~$\sigma_{\alg}$ can be bounded from above by 
    \begin{align*}
        \Length_{v_i}(\sigma_{\alg})\leq \sum_{k=0}^{j(i)} \length(T_{n_k}) =\pi(i) \ .
    \end{align*}
    This completes the proof.
\end{proof}

The following lemma bounds the \obj~$\Length(\sigma_{\alg})$. Similar lemmas have been proven in related settings by Goemans and Kleinberg~\cite{goemans1998improved} and Chaudhuri et al.~\cite{chaudhuri2003paths}.

\begin{lemma}\label{lem:pathinH}
For the \obj\ of the algorithm, we have~$\Length(\sigma_{\alg}) \leq z$ where~$z$ is the cost of a shortest~$(1,n)$-path in~$H$. 
\end{lemma}

\begin{proof}
Let~$P = (n_0,\dots,n_l)$ be a shortest~$(1,n)$-path in~$H$. Its cost is equal to
\begin{align*}
\cost(P) =\sum_{j=1}^l (n-n_{j-1}) \length(T_{n_j}) \ .
\end{align*}
Next, consider~$\sigma_{\alg}$ and recall from \Cref{lem:pi-upper-bound-vertex} that we can bound the latency of the~$i$-th vertex~$v_i$ in~$\sigma_\alg$ by~$\pi(i)$.
Taking the sum over all vertices, we obtain
\begin{multline*}
\Length(\sigma_{\alg})
\leq \sum_{i=1}^{n} \pi(i)
=\sum_{i=1}^{n} \sum_{k=0}^{j(i)} \length(T_{n_k})
= \sum_{j=1}^{l} (n_j - n_{j-1}) \sum_{k=0}^j \length(T_{n_k})\\
=\sum_{j=1}^l (n-n_{j-1}) \length(T_{n_j})
=\cost(P) \ .
\end{multline*}
Thus, the~\obj\ is bounded from above by the cost of path~$P=(n_0,n_1,\dots,n_l)$ in~$H$, as claimed.
\end{proof}

We now claim that the cost of the shortest~$(1,n)$-path~$P$ in~$H$ is at most~$2\mathrm{e}$ times the \obj\ along the optimal sequence.
To prove this claim, we consider a randomized sequence of exponentially growing subtrees of~$G$ and show that the path along their corresponding vertices in~$H$ has the desired property.

\begin{lemma}\label{lem:bound2e}
Let~$\sigma^*$ be an optimal expanding search pattern with a total latency of~$\Length^* = \Length(\sigma^*)$. Then, a shortest~$(1,n)$-path in~$H$ has cost at most~$2\mathrm{e}\Length^*$.
\end{lemma}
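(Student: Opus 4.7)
The plan is to use the probabilistic method with a random exponential shift, adapting the approach of Chaudhuri et al.\ from pathwise to expanding search.

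First, I would observe that for each $k \in \{1,\dots,n\}$, the first $k-1$ edges of $\sigma^*$ form a tree on $k$ vertices containing $r$ whose total length equals $L_k^*$, the latency of the $k$-th vertex explored by $\sigma^*$. Hence the optimal $k$-MST has length at most $L_k^*$, and Garg's $2$-approximation yields $\length(T_k) \le 2L_k^*$. Moreover, $\Length^* = \sum_{k=1}^n L_k^*$ (with $L_1^* = 0$).

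Next, for a parameter $\alpha \in [0,1)$ and $k \in \mathbb{Z}$, define $p(k) := \max\{j \in \{1,\dots,n\} : L_j^* \le e^{k+\alpha}\}$. This map is non-decreasing, equals $1$ for sufficiently small $k$ (since all edge lengths are positive integers), and equals $n$ for sufficiently large $k$. Let $P_\alpha$ be the $(1,n)$-path in $H$ obtained by listing the distinct values taken by $p$ in increasing order. Every jump from $p(k-1)$ to $p(k)$ contributes arc cost
\[
(n - p(k-1))\length(T_{p(k)}) \;\le\; 2(n - p(k-1))\, L_{p(k)}^* \;\le\; 2(n - p(k-1))\, e^{k+\alpha},
\]
using Garg's bound and the definition of $p(k)$. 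Since indices $k$ corresponding to non-jumps add only non-negative terms, summing over all $k$ gives
\[
\cost(P_\alpha) \;\le\; 2\sum_{k \in \mathbb{Z}}(n - p(k-1))\, e^{k+\alpha}.
\]
Swapping the order of summation, a non-root vertex with index $j \ge 2$ in $\sigma^*$ satisfies $j > p(k-1)$ iff $L_j^* > e^{k-1+\alpha}$, i.e., iff $k+\alpha < 1 + \ln L_j^*$, so
\[
\cost(P_\alpha) \;\le\; 2\sum_{j=2}^n \sum_{k\,:\,k+\alpha \,<\, 1+\ln L_j^*} e^{k+\alpha}.
\]

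Finally, I would take $\alpha$ uniformly at random in $[0,1)$ and invoke the identity $\mathbb{E}_\alpha\!\left[\sum_{k\in\mathbb{Z}} f(k+\alpha)\right] = \int_{-\infty}^{\infty}f(u)\,du$, valid for non-negative $f$, with $f(u) = e^u$ truncated at $1 + \ln L_j^*$: the inner sum evaluates in expectation to $\int_{-\infty}^{1+\ln L_j^*}e^u\,du = eL_j^*$. Summing over $j$ gives $\mathbb{E}_\alpha[\cost(P_\alpha)] \le 2e\,\Length^*$, so some realization of $\alpha$ produces a path in $H$ of cost at most $2e\,\Length^*$, and the shortest $(1,n)$-path is no longer.

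The main obstacle is the combinatorial bookkeeping aligning jumps of $p$ with arcs of $P_\alpha$ (and the boundary transitions from $1$ and to $n$); the conceptual crux is the last identity, which replaces what would otherwise be a discrete geometric sum carrying factor $e/(e-1)$ by a clean integral and so shaves a deterministic bound of $2e^2/(e-1) \approx 8.6$ down to the claimed $2e$.
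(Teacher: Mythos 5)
Your proof is correct and follows essentially the same approach as the paper's: Garg's $2$-approximation bound $\length(T_k)\leq 2L_k^*$, a path in $H$ built from exponentially spaced thresholds with a uniformly random shift of the exponent, and the probabilistic method to extract the factor $e$ in place of the deterministic $\gamma^2/(\gamma-1)$. Your bookkeeping differs slightly — you define the breakpoints via the optimal latencies $L_j^*$ rather than the approximate tree lengths and swap the order of summation to apply $\mathbb{E}_\alpha\bigl[\sum_{k}f(k+\alpha)\bigr]=\int f$, which lets you avoid the paper's two-case analysis on $d\leq b$ versus $d>b$ — but the underlying argument is the same.
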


\begin{proof}
Our goal is to construct a~$(1,n)$-path in~$H$ and compare its cost to the \obj\ of the optimal expanding search pattern.
To do so, let~$\Length^*_i$ denote the latency of the~$i$-th explored vertex in the optimal expanding search pattern.
Observe that no expanding search pattern can explore the~$i$-th vertex with latency less than the length of an optimal~$i$-MST.
Hence, this is also true for the optimal expanding search pattern. Since we use the 2-approximation algorithm by Garg~\cite{Garg05} to solve the~$k$-MST problem, we obtain that~$\length(T_i)\leq 2 \Length^*_i$.
To show the result, let~$\gamma > 1$ and~$b\in [1,\gamma)$ be two parameters to be optimized later.
Let~$\omega \in \mathbb{Z}$ be the smallest number such that~$\gamma^\omega > n \length(T_{n})$,
and let~$\alpha \in \mathbb{Z}$ be the largest number such that~$2b\gamma^\alpha<\min_{e \in E} \length_e$.
Then, for all~$j \in \{\alpha,\dots,\omega\}$, let
\begin{align*}
n_j = \max \bigl\{k \in [n] : \length(T_{k}) \leq 2b\gamma^j \bigr\} \ ,
\end{align*}
i.e.,~$n_j$ is defined to be the largest number of vertices that can be visited by one of the~$i$-MSTs~$T_1,T_2,\dots,T_{n}$ computed with the 2-approximation algorithm of Garg~\cite{Garg05} such that the length of the tree is bounded from above by~$2b\gamma^j$.
Note that these values are well-defined since~$\length(T_1)=0$ and that by the choices of~$\alpha, \omega \in \mathbb{Z}$, we have~$n_{\alpha} = 1$ and~$n_{\omega} = n$, since~$\ell(T_n) < \frac{1}{n} \gamma^\omega \leq 2b\gamma^\omega$.
Consider the sequence~$n_{\alpha}, n_{\alpha +1}, \dots, n_{\omega}$  and note that the sequence is non-decreasing.
We may assume that it is strictly increasing as we consider the inclusion-wise largest strictly increasing subsequence of~$n_{\alpha}, n_{\alpha +1}, \dots, n_{\omega}$ otherwise.
Thus, the sequence corresponds to a~$(1,n)$-path given by~$P = (n_\alpha,n_{\alpha+1},\dots,n_{\omega})$ in~$H$.
We proceed to compute its expected cost.

To this end, let~$b = \gamma^U$ where~$U$ is a random variable distributed uniformly in~$[0,1)$.
The parameter~$\gamma$ will be determined later.
Let~$\sigma^*$ be an optimal expanding search pattern and let~$v_1^*,\dots,v_{n}^*$ with~$r = v_1^*$ be the vertices as they are explored by~$\sigma^*$.
Further let~$i \in \{2,3,\dots,n\}$ be arbitrary and let~$j \in \{\alpha,\dots,\omega\}$ and~$d \in [1,\gamma)$ be such that
$\Length_{i}^*=  d\gamma^{j}$.
We brief{}ly argue that such values always exist.
First, note that with~$i\geq 2$, we have~$L_i^*\geq \min_{e \in E} \length_e > 2b\gamma^\alpha>\gamma^\alpha$.
Furthermore,~$L_i^*$ can be bounded by~$L_i^* \leq \sum_{j=1}^i \ell(T_j^*) \leq n  \ell(T_n)<\gamma^{\omega}$, where~$T_j^*$ is the optimal~$j$-MST.
We distinguish two cases:

\subsection*{First Case:~$d\leq b$}
Since there exists a tree containing the root with length at most~$d\gamma^j \leq b\gamma^j$ that contains at least~$i$ vertices, we know that
our~$2$-approximation of~$i$-MST has length at most~$2b \gamma^j$.
Thus,~$n_j \geq i$ holds.
This allows us to bound~$\pi(i)$ from above by
\begin{align*}
    \pi(i)
    \leq \sum_{k=\alpha}^{j} \length(T_{n_k}) \leq \sum_{k=\alpha}^j 2b\gamma^k 
    < \sum_{k=-\infty}^j 2b\gamma^k
    =2b\frac{\gamma^{j+1}}{\gamma-1}
    =2b\gamma^j\frac{\gamma}{\gamma-1} \ .
\end{align*}

\subsection*{Second Case:~$d > b$}
We have that~$d \gamma^j < \gamma^{j+1}$ since~$d <\gamma$.
This implies that there is a tree containing the root with length at most~$\gamma^{j+1}$ containing at least~$i$ vertices. Analogously to the argumentation in the first case, we obtain~$n_{j+1} \geq i$. This allows us to bound~$\pi(i)$ from above by
\begin{align*}
    \pi(i)
\leq \sum_{k=\alpha}^{j+1}\length(T_{n_k}) \leq \sum_{k=\alpha}^{j+1} 2b\gamma^k < \sum_{k=-\infty}^{j+1} 2b\gamma^k
= 2b\frac{\gamma^{j+2}}{\gamma-1} 
= 2b\gamma^{j+1}\frac{\gamma}{\gamma-1} \ .
\end{align*} 

We have~$U\in [\log_\gamma d,1]$ in the first case, and~$U\in [0,\log_\gamma d)$ in the second.
Taking the expectation over~$U$, we obtain
\begin{align*}
    \mathbb{E}_U\bigl[\pi(i)\bigr]
    &\leq \int_{\log_\gamma d}^1 2b\gamma^j\frac{\gamma}{\gamma-1} \,\mathrm{d}U + \int_0^{\log_\gamma d} 2b\gamma^{j+1}\frac{\gamma}{\gamma-1} \,\mathrm{d}U\\
    &=2\gamma^j\frac{\gamma}{\gamma-1}\left[ \int_{\log_\gamma d}^1 \gamma^U \,\mathrm{d}U + \gamma\int_0^{\log_\gamma d} \gamma^U \,\mathrm{d}U \right]\\
    &=2\gamma^j\frac{\gamma}{\gamma-1}\left[ \frac{\gamma-d}{\ln \gamma} + \gamma\frac{d-1}{\ln \gamma} \right]\\
    &=2\gamma^jd\frac{\gamma}{\ln \gamma}\\
    &= 2\Length_i^* \frac{\gamma}{\ln \gamma} \ .
\end{align*}
Therefore, using~$\cost(P)=\sum_{i=1}^{n} \pi(i)$, we obtain
\begin{align*}
\frac{\mathbb{E}_U[\cost(P)]}{\Length^*}
=\frac{\mathbb{E}_U[\sum_{i=1}^n \pi(i)]}{\Length^*}
= \frac{\sum_{i=1}^n \mathbb{E}_U\bigl[\pi(i)\bigr]}{\sum_{i=1}^n \Length_i^*}
\leq \frac{2 \frac{\gamma}{\ln \gamma}\sum_{i=1}^n \Length_i^*}{\sum_{i=1}^n \Length_i^*}
=2\frac{\gamma}{\ln \gamma} \ .
\end{align*}
This quantity is minimized for~$\gamma=\mathrm{e}$ for which the approximation ratio turns out to be~$2\mathrm{e}\approx 5.44$.
Hence, the randomized~$(1,n)$-path~$P$ has expected cost at most~$2\mathrm{e}\Length^*$.
Therefore, the cost of a shortest~$(1,n)$-path in~$H$ is also bounded by~$2\mathrm{e}\Length^*$.
\end{proof}

Together, Lemma~\ref{lem:pathinH} and Lemma~\ref{lem:bound2e} imply \Cref{thm:unweighted}.

\section{The Weighted Case}
\label{sec:weighted}

In this section, we prove our main result for the weighted setting, i.e., we prove the following theorem.

\begin{theorem}
    \label{thm:weighted}
     There is a polynomial-time~$(2\mathrm{e}+\eps)$-approximation algorithm for the expanding search problem with weights~$w_v \in \N$ for every~$\eps>0$. 
\end{theorem}

The proof consists of two parts.
First, we provide a polynomial-time~$2\mathrm{e}$-ap\-prox\-i\-mation algorithm for binary weights, i.e., the case in which each vertex has a weight of either~$0$ or~$1$.
In the second step, we reduce the problem with general weights to the binary case. 
In this reduction, we lose a factor of~$(1+ \varepsilon)$ in the approximation guarantee. 

The following two sections are devoted to these two results.

\subsection{A~$2\mathrm{e}$-Approximation Algorithm for Binary Weights}

In this section, we provide a polynomial-time~$2\mathrm{e}$-approximation algorithm for binary weights, i.e., we show the following result.
The algorithm uses the same algorithmic ideas as the algorithm for unit weights, but gives sufficient priority to the vertices with weight~$1$.

\begin{lemma}
\label{lem:0-1-weights}
 There is a polynomial-time~$2\mathrm{e}$-approximation algorithm for the expanding search problem with binary weights~$w_v \in \{0,1\}$. 
\end{lemma}

\begin{proof}
Let~$G=(V, E)$ be an instance of the expanding search problem where every vertex has a weight of either~$0$ or~$1$.
Let~$V^0\subset V$ be the vertices in~$V$ with weight~$0$, and analogously, let~$V^1\subset V$ be the vertices in~$V$ with weight~$1$.
We define~$n^0 = |V^0|$ and~$n^1 = |V^1|$ such that~$n = n^0+n^1=|V|$. We may assume that~$n^1 > 0$ and~$n^0 > 0$ since the result is trivial or follows from \Cref{thm:unweighted} otherwise. 

We construct an instance~$G'=(V',E')$ where all vertices have a weight of~$1$ as follows:
We start with~$G=(V,E)$ and set all vertex weights to~$1$.
Next, for each vertex~$v\in V'$ corresponding to a vertex in~$V^1$, we add~$2n-1$ copies of~$v$ and connect these to~$v$ by an edge of length~$0$.
Thus, the vertices in~$V'$ can be categorized in \emph{original} vertices, \emph{copies}, and vertices corresponding to a vertex of weight~$0$ in~$V$.
This finishes the construction and we obtain~$w(G')=2nn^1+n^0=2nw(G)+n^0<(2w(G)+1)n$ since~$n^0<n$.
Further, note that for any tree~$T$ in~$G$ there is a corresponding tree~$T'$ in~$G'$ with~$2nw(T)\leq w(T')$ and~$c(T)=c(T')$.

Next, we run an algorithm similar to the one given in \Cref{app:unweighted}:
\begin{enumerate}[label=\arabic*),leftmargin=*]
    \item For all~$k\in \{2n,4n,\dots,2n^1n\}$, solve the~$k$-MST problem on~$G'$ with the 2-approximation algorithm of Garg~\cite{Garg05} and obtain~$n^1$ trees~$T_1,\dots,T_{n^1}$.
    \item Construct the auxiliary weighted directed graph~$H =(V_H,A_H)$ with the vertex set~$V_H = \{1,2,\dots,n^1\}$, the arc set~$A_H=\{(i,j) \in V_H^2 : i<j\}$, and the arc costs~$\cost_ {i,j} =(n^1-i)\length(T_j)$.
    \item Compute a shortest~$(1,n^1)$-path~$P = (n_0,n_1,\dots,n_l)$ with~$n_0 = 1$ and~$n_l = n^1$ in~$H$.
    \item For each phase~$j\in [l]$ explore all unexplored vertices of~$V[T_{n_j}]$ in any feasible order using the edge set of~$E[T_{n_j}]$. 
\end{enumerate}
Let~$\sigma'$ be the obtained expanding search sequence for instance~$G'$.
The sequence does not necessarily have a finite objective value for instance~$G'$ as~$2n^1n < w(G')$.
However, starting from~$\sigma'$, we can still construct an expanding search sequence~$\sigma$ for instance~$G$ with finite objective value.
First, note that since~$2n^1n - n^0 > 2n(n^1 -1)$ it follows that every original vertex~$v\in V'$ is visited by~$\sigma'$.
Furthermore, as all copies in~$V'$ are connected to an original vertex by an edge of length~$0$, we may adjust~$\sigma'$ such that the first visit of an original vertex~$v\in V'$ is immediately followed by a visit of all of its copies.
Thus, we obtain a new expanding search sequence~$\sigma''$ visiting every vertex visited by~$\sigma'$ without increasing any latency.
Moreover,~$\sigma''$ visits all original vertices and all copies in~$V'$.
To obtain the expanding search sequence~$\sigma$ for instance~$G$ from~$\sigma''$, we skip all edges in~$\sigma''$ that correspond to edges connecting an original vertex to one of its copies.
Since~$\sigma''$ visits all original vertices, the sequence~$\sigma$ has a finite objective value.

It remains to show that sequence~$\sigma$ yields the desired approximation ratio of~$2\mathrm{e}$.
To this end, observe that vertex~$v\in V^1$ has the same latency in~$\sigma$ as its corresponding original vertex~$\smash{v'\in V'}$ and all its copies in the sequence~$\sigma''$.
For~$i \in [n^1]$ let~$j(i) \in  [l]$ be such that~$\smash{n_{j(i) - 1} < 2ni\leq n_{j(i)}}$.
Because~$\smash{n^0 < n}$, tree~$\smash{T_{j(i)}}$ contains at least~$i$ original vertices.
Then we define~$\smash{\pi(i)= \sum_{k=0}^{j(i)}\ell(T_{n_k})}$.
Let~$v_i$ be the~$i$-th vertex of~$V^1$ visited by~$\sigma$ and let~$v_i'$ be the corresponding original vertex in~$\smash{V'}$.
With an analogous computation as in \Cref{lem:pi-upper-bound-vertex} and \Cref{lem:pathinH} we obtain
\begin{multline*}
 \Length(\sigma)
=  \sum_{i=1}^{n^1}L_{v_i}(\sigma)
=  \sum_{i=1}^{n^1}L_{v_i'}(\sigma'')
\leq  \sum_{i=1}^{n^1} \pi(i)
=  \sum_{i=1}^{n^1} \sum_{k=0}^{j(i)} \length(T_{n_k})\\
= \sum_{j=1}^{l} (n_j - n_{j-1}) \sum_{k=0}^j \length(T_{n_k})
=\sum_{j=1}^l (n^1-n_{j-1}) \length(T_{n_j})
=\cost(P) \ ,
\end{multline*}
where~$P$ is a shortest~$(1,n^1)$-path in~$H$.

Finally, we show that the cost of a shortest~$(1,n^1)$-path in~$H$ is upper bounded by~$2\mathrm{e}$ times the total latency of an optimal expanding search pattern for instance~$G$.
In particular, we use the same technique as in the proof of \Cref{lem:bound2e}. 
We let~$\sigma^*$ be an optimal expanding search sequence for~$G$ and denote by~$L_i^*$ the latency of the~$i$-th vertex of~$V^1$ visited by~$\sigma^*$. 
Using the 2-approximation algorithm by Garg~\cite{Garg05}, we again obtain that~$\length(T_i)\leq 2 \Length^*_i$.
The definitions of~$n_j, \alpha, \omega, \gamma,$ and~$b$ and the construction of the randomized~$(1,n^1)$-path~$P$ in~$H$  are the same as in the proof of \Cref{lem:bound2e}.
Following its line of argumentation, we obtain that 
\begin{align*}
    \mathbb{E}_U\bigl[\pi(i)\bigr]
    &= 2\Length_i^* \frac{\gamma}{\ln \gamma}
\end{align*}
for all~$i\in [n^1]$.
In total, this yields that 
\begin{align*}
\frac{\mathbb{E}_U[\cost(P)]}{\Length^*}
=\frac{\mathbb{E}_U[\sum_{i=1}^{n^1} \pi(i)]}{\Length^*}
= \frac{\sum_{i=1}^{n^1} \mathbb{E}_U\bigl[\pi(i)\bigr]}{\sum_{i=1}^{n^1} \Length_i^*}
\leq \frac{2 \frac{\gamma}{\ln \gamma}\sum_{i=1}^{n^1} \Length_i^*}{\sum_{i=1}^{n^1} \Length_i^*}
=\frac{2\gamma}{\ln \gamma} \ .
\end{align*}
Setting~$\gamma=\mathrm{e}$ implies
\begin{align*}
    \mathbb{E}_U[\cost(P)] \leq 2\mathrm{e}\Length^* \ .
\end{align*}
Hence, choosing a shortest~$(1,n)$-path~$P$ in~$H$ yields 
\begin{align*}
    L(\sigma)\leq c(P) \leq 2\mathrm{e}\Length^*
\end{align*}
as claimed.
\end{proof}

\subsection{Reducing the Weighted Case to Binary Weights}
In this section, we prove the following result.

\begin{lemma}
\label{lem:reduction:weighted-01}
    Given a polynomial-time~$\alpha$-approximation algorithm for the expanding search problem with binary weights, there is a polynomial-time~$(1 + \varepsilon) \alpha$-approx\-i\-mation algorithm for the expanding search problem on weighted graphs.
\end{lemma}

Our approach has two steps inspired by Sitters~\cite{sitters2021polynomial}. 
In the first step (\Cref{lem:euclidean-red1}), we show a reduction from the expanding search problem to the so-called the~$\delta$-bounded expanding search problem, for some constant~$\delta\in\mathbb{R}_+$, in the sense that a polynomial-time~$\alpha$-approximation for the weighted~$\delta$-bounded expanding search problem implies a polynomial-time~$(\alpha + \varepsilon)$-approximation for the weighted expanding search problem. 
In the second step (\Cref{lem:rounding}), we show that if there is a polynomial-time~$\alpha$-approximation algorithm for the weighted~$\delta$-bounded expanding search problem with weights in~$\{0,1\}$, then there is a polynomial-time~$(\alpha + \varepsilon)$-approximation algorithm for the~$\delta$-bounded expanding search problem.
Hence, both results combined imply \Cref{lem:reduction:weighted-01}.

In the~$\delta$-bounded expanding search problem, the input is as for the expanding search problem, but comes with an additional delay parameter~$D\geq 0$.
Further, there is a parameter~$\delta$, not part of the input, such that there exists a solution that visits all vertices with non-zero weight and has length at most~$\delta D$.
The objective is to minimize~$\smash{\Length^{D}(\sigma)=\sum_{v \in V^*} w_v \Length^{D}_v(\sigma)}$ where~$\Length^D_v(\sigma)=D+\Length_v(\sigma)$. 

In the following, we assume~$0 < \varepsilon \leq 1$ and use~$O_\eps(f)$ to denote~$O(f)$ when~$\varepsilon$ is a constant.
\medskip

\textbf{Reducing the Expanding Search Problem to the~$\delta$-Bounded Expanding Search Problem.}

We show the following lemma.
\begin{lemma}\label{lem:euclidean-red1}
    Consider any class~$\mathcal{C}$ of graphs with edge lengths and constants~$\alpha>1$ and~$\eps\in(0,1]$. There exists a constant~$\delta>0$ such that, if there exists a polynomial-time~$\alpha$-approximation algorithm for the~$\delta$-bounded expanding search problem on~$\mathcal{C}$, then there exists a po\-ly\-no\-mial-time~$(\alpha+\eps)$-approximation algorithm for the expanding search problem on~$\mathcal{C}$. 
\end{lemma}

We follow the decomposition approach by Sitters~\cite{sitters2021polynomial} and adapt it to the expanding search problem at several places.
To do so, we assume that a polynomial-time~$\alpha$-approximation algorithm for the~$\delta$-bounded expanding search problem on~$\mathcal{C}$ for a yet-to-be-determined value of~$\delta$ is given and we denote it by~$\textsc{Approx}_{\text{$\delta$-bd}}$.
In the remainder of this subsection, we first construct a polynomial-time algorithm for the expanding search problem on~$\mathcal{C}$ based on~$\textsc{Approx}_{\text{$\delta$-bd}}$ and a given~$\eps>0$.
Afterward, we show that this algorithm yields an approximation guarantee of~$(\alpha+O(\eps))$.

In that direction, we further require a polynomial-time~$\beta$-approxi\-mation algorithm~$\textsc{Approx}_\beta$ for the expanding search problem on~$\mathcal{C}$ for some constant~$\beta$.
We emphasize that \emph{any} constant~$\beta$ is sufficient to obtain an approximation guarantee of~$\alpha+\varepsilon$ in polynomial time. Therefore, we can pick, e.g., the algorithm from~\cite{HermansLM22}. 
For notational purposes, we assume~$\alpha\neq\beta$.

Our algorithm is now designed as follows.
First, we apply~$\textsc{Approx}_\beta$.
This yields an order of the vertices according to their search times in the solution.
Next, we partition the vertices by cutting this order at several places and run~$\textsc{Approx}_{\text{$\delta$-bd}}$ on the (carefully defined) emerging subinstances.
Note that despite their low total latencies, these subsolutions may have large total length.
Thus, we cannot simply concatenate them to obtain a solution for the original instance as this would delay the subsolutions of all later subinstances.
We solve this issue by cutting each subsolution at a certain point and using the subsolution given by~$\textsc{Approx}_\beta$ from then on---a solution with a length bound.

In the following, we present our algorithm for the expanding search problem in more detail.
To this end, let~$I$ be an instance of the expanding search problem on a metric space~$\mathcal{C}$ and let~$\varepsilon>0$ be given.
The algorithm consists of five steps.

\medskip
\begin{enumerate}[label=\arabic*),leftmargin=*]
    \item \textbf{Approximate:} Apply~$\textsc{Approx}_\beta$ to instance~$I$ and obtain solution~$\sigma_\beta$.
    
    \item \textbf{Partition:}
    \begin{itemize}
        \item Define~$\gamma = \frac 3\varepsilon$,~$a = \frac{\beta \gamma}{\varepsilon}$, and draw~$b$ uniformly at random from~$[0, a]$.
        \item Let~$q\in \N$ be the smallest number such that~$\Length_v(\sigma_\beta) < \mathrm{e}^{(q-1) a+b}$ for all~$v\in V$ and define time points~$t_i = \mathrm{e}^{(i-2) a+b}$ for~$i\in[q+1]$.
        \item For~$i\in[q]$, let~$V_i = \{ v\in V : t_i \leq \Length_v(\sigma_\beta) < t_{i+1} \}$ and~$U_i = V_1 \cup \dots \cup V_i$. 
        \item For~$i\in[q]$, define~$I_i$ to be the instance obtained from~$I$ by setting the weight of all vertices in~$V\setminus V_i$ to~$0$.
        Note that~$I_i$ with delay parameter~$D_i = \gamma t_i$ is an instance of the~$\delta$-bounded expanding search problem with~$\delta = \frac{\mathrm{e}^a}{\gamma}$ since
        \begin{align*}
        \delta D_i = \frac{\mathrm{e}^a}{\gamma} \gamma t_i = \mathrm{e}^{a} t_i  = t_{i+1}\ ,
        \end{align*}
        and by definition, all vertices in~$V_i$ can be explored with a sequence of length~$t_{i+1}$.
    \end{itemize}
    \item \textbf{Approximate Subproblems:} For~$i\in[q]$, apply~$\textsc{Approx}_{\text{$\delta$-bd}}$ to~$I_i$ and obtain an~$\alpha$-approximation~$\sigma_{\alpha,i}$ for the~$\delta$-bounded expanding search problem on~$I_i$ with~$\delta=\frac{\mathrm{e}^a}{\gamma}$.
    \item \textbf{Modify:} For each~$i\in[q]$, define~$\sigma_i$ to be~$\sigma_{\alpha,i}'+\sigma_{\beta,i+1}$ where:
    \begin{itemize}
        \item~$\sigma_{\alpha,i}'$ is the longest prefix of~$\sigma_{\alpha,i}$ of length at most~$(1+\frac{\mathrm{e}^a}{\eps\gamma}) \gamma t_i$ and
        \item~$\sigma_{\beta,i+1}$ is the prefix of~$\sigma_{\beta}$ visiting~$U_{i+1}$.
    \end{itemize}

    \item \textbf{Concatenate:} Return~$\sigma=\sigma_1+\dots+\sigma_q$.
\end{enumerate}
\medskip

We prove Lemma~\ref{lem:euclidean-red1} by establishing two intermediate results about the algorithm described above.
First, we demonstrate that dividing the instance~$I$ into subinstances~$I_i$ of the~$\smash{\frac{\mathrm{e}^a}{\gamma}}$-bounded expanding search problem incurs a loss of at most a factor of~$(1+\varepsilon)$ in the optimal objective-function value.
To formalize this, let~$\sigma^*$ represent an optimal solution for~$I$, and let~$\sigma^*_i$ denote an optimal solution for each subinstance~$I_i$ with~$i \in [q]$.

\begin{lemma}\label{lem:euclidean-red1-sub1}
    It holds that \smash{$\mathbb{E}\left[\sum_{i=1}^q \Length^{D_i}(\sigma^*_i)\right]\leq (1+\eps) \Length(\sigma^*).
$}
\end{lemma}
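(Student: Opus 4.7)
The plan is to show that $\sum_{i} C'(\sigma^*_i)$ decomposes as $C(\sigma^*)$ plus an error term controlled by the randomization of $b$. The starting point is the observation that $\sigma^*$ is a feasible solution to each subinstance $I_i$: $I_i$ uses the same graph as $I$ and only zeros out the weights of vertices outside $V_i$. Optimality of $\sigma^*_i$ on $I_i$ then gives $C'(\sigma^*_i)\leq C'(\sigma^*)=\sum_{v\in V_i} w_v(L_i+C_v(\sigma^*))$, where $L_i=\gamma t_i$. Summing over $i$ and using that the sets $V_i$ form a (near-)partition of the positive-weight vertices yields
\begin{equation*}
\sum_{i\in[q]} C'(\sigma^*_i) \;\leq\; C(\sigma^*) + \gamma\sum_{v} w_v\, t_{i(v)},
\end{equation*}
where $i(v)$ denotes the index of the group containing $v$. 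The remaining task is to control the expected value of the second term.

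Second, I would evaluate the expectation of $t_{i(v)}$ under the random phase~$b$. For a fixed vertex $v$ with $c:=C_v(\sigma_\beta)$, $t_{i(v)}$ is the largest element of the geometric progression $\{e^{(i-2)a+b}\}_i$ at most $c$, so I can write $t_{i(v)}=c\,e^{-aY}$ where $Y$ is the fractional part of $(\ln c - b)/a$. Because $b$ is uniform on $[0,a]$, $Y$ is uniform on $[0,1)$, and therefore
\begin{equation*}
\mathbb{E}[t_{i(v)}] \;=\; c\cdot\frac{1}{a}\int_0^a e^{-s}\,\mathrm{d}s \;=\; c\cdot\frac{1-e^{-a}}{a} \;\leq\; \frac{c}{a}.
\end{equation*}
This is the crucial gain: the random shift turns the deterministic worst case $t_{i(v)}\leq c$ into the much better bound $\mathbb{E}[t_{i(v)}]\leq c/a$.

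Third, plugging in the parameter choice $a=\beta\gamma/\eps$, i.e.\ $\gamma/a=\eps/\beta$, and invoking the $\beta$-approximation guarantee $C(\sigma_\beta)\leq \beta\, C(\sigma^*)$, I get
\begin{equation*}
\mathbb{E}\Bigl[\gamma\sum_{v} w_v\, t_{i(v)}\Bigr] \;\leq\; \frac{\eps}{\beta}\sum_{v} w_v\, C_v(\sigma_\beta) \;=\; \frac{\eps}{\beta}C(\sigma_\beta) \;\leq\; \eps\cdot C(\sigma^*).
\end{equation*}
Combining with the decomposition from the first paragraph yields $\mathbb{E}[\sum_i C'(\sigma^*_i)]\leq (1+\eps)\,C(\sigma^*)$, as claimed.

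The main obstacle is recognising and formalising the expectation calculation in the second paragraph: without the random phase, one only gets the trivial bound $t_{i(v)}\leq c$, producing a $\gamma\cdot C(\sigma_\beta)$ term, i.e.\ a constant-factor blow-up of $C(\sigma^*)$ rather than an $\eps\cdot C(\sigma^*)$ term. A minor technical point is that vertices with $C_v(\sigma_\beta)<t_1$ lie in no $V_i$; this is harmless because one can either extend the partition by an initial bucket absorbing these vertices (at negligible cost since their $\sigma^*$-latency is already tiny) or assume after scaling that $a$ is large enough for every positive-weight vertex to be captured by some $V_i$.
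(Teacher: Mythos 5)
Your proof is correct and follows essentially the same route as the paper: feasibility of $\sigma^*$ for each $I_i$, summing over $i$ to isolate the delay term $\gamma\sum_v w_v t_{i(v)}$, computing $\mathbb{E}[t_{i(v)}]\leq C_v(\sigma_\beta)/a$ via the uniform random shift, and closing with $a=\beta\gamma/\eps$ and the $\beta$-approximation guarantee. The boundary issue you flag (vertices below $t_1$) is a detail the paper also elides, and your handling of it is fine.
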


\begin{proof}
    First, observe that~$\sigma^*$ is a solution to~$I_i$, for any~$i\in[q]$. Therefore,
    \begin{align*}
        \sum_{v\in V_i^*} w_v (\gamma t_i+\Length_v(\sigma^*_i))\leq \sum_{v\in V_i^*} w_v (\gamma t_i+\Length_v(\sigma^*)) \ .
    \end{align*}
    Summing over all~$i\in [q]$ and taking expectations, we obtain
    \begin{align}
      \mathbb{E}\left[\sum_{i=1}^q\sum_{\smash{v\in V_i^*}} w_v (\gamma t_i+\Length_v(\sigma^*_i))\right]
        \leq &\;\mathbb{E}\left[\sum_{i=1}^q\sum_{\smash{v\in V_i^*}} w_v (\gamma t_i+\Length_v(\sigma^*))\right]\nonumber\\
        =&\;\sum_{\smash{v\in V^*}} w_v \Length_v(\sigma^*)+\sum_{\smash{v\in V^*}}\gamma w_v \mathbb{E}\left[t_{i(v)}\right] \ ,\label{eq:euclidean-red1-2}
    \end{align}
    where~$i(v)\in\N$ is defined such that~$v\in V_{i(v)}$ for each~$v\in V$.
    To analyze the second summand on the right-hand side, note that~$t_{i(v)}$ is a random variable of the form~$\smash{t_{i(v)} = \mathrm{e}^{-x} \Length_v(\sigma_\beta)}$ where~$x$ is uniform in~$[0, a]$.
    Hence,
\begin{equation*}
    \mathbb{E}\left[t_{i(v)}\right]
    = \frac{\Length_v(\sigma_\beta)}{a} \int_{0}^{a} \mathrm{e}^{-x} \,\mathrm{d}x = \frac{\Length_v(\sigma_\beta)}{a}(1- \mathrm{e}^{-a})  < \frac{\Length_v(\sigma_\beta)}{a} \ .
\end{equation*}
Together with Inequality~\eqref{eq:euclidean-red1-2}, this yields
\begin{align*}
    \mathbb{E}\left[\sum_{i=1}^q \Length^{D_i}(\sigma^*_i)\right]
    =&\; \mathbb{E}\left[\sum_{i=1}^q\sum_{\smash{v\in V_i^*}} w_v (\gamma t_i+\Length_v(\sigma^*_i))\right]\\
    \leq&\; \sum_{\smash{v\in V^*}} w_v \Length_v(\sigma^*)+\frac{\gamma}{a}\sum_{\smash{v\in V^*}}w_v \Length_v(\sigma_\beta)\\
    =&\;\sum_{\smash{v\in V^*}} w_v \Length_v(\sigma^*)+\frac{\eps}{\beta} \sum_{\smash{v\in V^*}}w_v \Length_v(\sigma_\beta)\\
    \leq&\; (1+\eps) \sum_{v\in V^*} w_v \Length_v(\sigma^*)\\
    =&\; (1+\eps) \Length(\sigma^*) \ ,
\end{align*}
using that~$\sigma_\beta$ is a~$\beta$-approximation in the last inequality.
\end{proof}

The following lemma addresses Step 4 of the algorithm. For each~$i \in [q]$, it provides an upper bound on the cost of~$\sigma_i$ relative to the cost of~$\sigma_i^*$ and establishes an upper bound on the total length of~$\sigma_i$.

\begin{lemma}\label{lem:euclidean-red1-sub2}
    For each~$i\in[q]$, the total length of~$\sigma_i$ is at most~$\gamma t_{i+1}-\gamma t_i$. Furthermore, it holds that
       ~$\Length^{D_i}(\sigma_i)\leq\alpha(1+\eps)  \Length^{D_i}(\sigma^*_i)$.
\end{lemma}

\begin{proof}
Consider an~$i\in[q]$. By construction, the total length of~$\sigma_i$ is at most 
\begin{multline*}
\left(1+\frac{\mathrm{e}^a}{\eps\gamma}\right) \gamma t_i+t_{i+1}
 =\left(1+\frac{\mathrm{e}^a}{\eps\gamma}+\frac{\mathrm{e}^a}{\gamma}\right) \gamma t_i
\leq \left(1 + \frac{\mathrm{e}^a}{3} + \frac{\mathrm{e}^a}{3} \right) \gamma t_i \\
= \left( \mathrm{e}^a - \frac{\mathrm{e}^a}{3} + 1 \right) \gamma t_i <(\mathrm{e}^a-1) \gamma t_i
=\gamma t_{i+1}-\gamma t_i \ ,
\end{multline*}
where we use~$\eps\leq 1$,~$\gamma\geq 3$, and~$a\geq 3$ for the inequalities. 

For the search times~$\Length_v(\sigma_i)$, we analyze each vertex~$v \in V_i$ with~$w_v > 0$ individually. First, observe that for any vertex~$v$ visited during the first part of~$\sigma_i$, namely~$\sigma_{\alpha,i}'$, the search time~$\Length_v(\sigma_i)$ remains the same as~$\Length_v(\sigma_{\alpha,i})$ and thus also~$\Length^{D_i}_v(\sigma_{i})=\Length^{D_i}_v(\sigma_{\alpha,i})$ holds.
For the remaining~$v\in V_i$, we may bound their latency by the total length of~$\sigma_i$, i.e.,~$\Length_v(\sigma_i)\leq (1+\frac{\mathrm{e}^a}{\eps\gamma}+\frac{\mathrm{e}^a}{\gamma}) \gamma t_i$.
Further, we have~$\Length_v(\sigma_{\alpha,i})\geq (1+\frac{\mathrm{e}^a}{\eps\gamma}) \gamma t_i$ by construction.
Combining these two inequalities, we obtain
\[
\frac{\Length^{D_i}_v(\sigma_{i})}{\Length^{D_i}_v(\sigma_{\alpha,i})}
=\frac{\gamma t_i+\Length_v(\sigma_i)}{\gamma t_i+\Length_v(\sigma_{\alpha,i})}
\leq \frac{2+\frac{\mathrm{e}^a}{\eps\gamma}+\frac{\mathrm{e}^a}{\gamma}}{2+\frac{\mathrm{e}^a}{\eps\gamma}}
< 1+\frac{2\eps+\frac{\mathrm{e}^a}{\gamma}}{2+\frac{\mathrm{e}^a}{\eps\gamma}}
=1+\eps \ .
\]
Summing over all vertices yields
\[
\Length^{D_i}(\sigma_i)=
\sum_{v\in V_i^*}w_v (\gamma t_i+ \Length_v(\sigma_i))
< (1+\eps) \sum_{v\in V_i^*} w_v (\gamma t_i+\Length_v(\sigma_{\alpha,i}))
=(1+\eps) \Length^{D_i}(\sigma_{\alpha,i}) \ .
\]
Using that~$\sigma_{\alpha,i}$ is an~$\alpha$-approximation for~$I_i$ completes the proof.
\end{proof}

With these lemmata at hand, Lemma~\ref{lem:euclidean-red1} easily follows.

\begin{proof}[Proof of Lemma~\ref{lem:euclidean-red1}]
Lemma~\ref{lem:euclidean-red1-sub2} implies that in the concatenation of~$\sigma_1, \dots, \sigma_q$, the sequence~$\sigma_i$ begins after a total length of at most~$\gamma t_i$ for each~$i \in [q]$. Consequently, by Lemma~\ref{lem:euclidean-red1-sub2} again, the expected total latency of the concatenated sequence, as a solution to~$I$, is bounded from above by the sum of the left-hand side of the inequality in Lemma~\ref{lem:euclidean-red1-sub2} over all~$i \in [q]$.
Thus, applying this inequality, taking the expectation, and subsequently using Lemma~\ref{lem:euclidean-red1-sub1} yields
\begin{align*}
    \mathbb{E}\left[\Length(\sigma)\right]
    &\leq \mathbb{E}\left[\sum_{i=1}^q \gamma t_i+\Length(\sigma^*_i)\right]\\
    &\leq \mathbb{E}\left[\sum_{i=1}^q \Length^{D_i}(\sigma^*_i)\right]\\
    &\leq\alpha(1+\eps^*)  \mathbb{E}\left[\sum_{i=1}^q\Length^{D_i}(\sigma^*_i)\right]\\
    &\leq\alpha(1+\eps^*)(1+\eps^*)  \Length(\sigma^*) \ .    
\end{align*}
Finally, setting~$\eps=\sqrt{1+ \frac{\eps^*}{\alpha}}-1$ completes the proof. 
\end{proof}

Recall that the value~$b$ was drawn uniformly at random from the interval~$[0,a]$.
Thus, the partition by time points~$t_i$ is random. 
We note, however, that the algorithm can be derandomized using the same techniques as in~\cite{sitters2021polynomial}, i.e., by enumerating all partitions.


\medskip

\textbf{Reducing the~$\delta$-Bounded Expanding Search Problem to the~$\delta$-Bounded Expanding Search Problem with Binary Weights.}

We show the following lemma.
A similar statement has been proven by Sitters~\cite{sitters2021polynomial} for the pathwise search problem.

\begin{lemma}[See~\cite{sitters2021polynomial}, Lemma 2.10]
\label{lem:rounding}
Consider any class~$\mathcal{C}$ of graphs with edge lengths and any constants~$\alpha>1$,~$\delta>0$, and~$\eps\in(0,1]$.
If there exists a polynomial-time~$\alpha$-approximation algorithm for the~$\delta$-bounded expanding search problem with binary weights
then there exists a polyno\-mi\-al-time~$(\alpha+\eps)$-approximation algorithm for the~$\delta$-bounded expanding search problem.
\end{lemma}

\begin{proof}
We begin by defining a new weight function~$w'$ that ensures each vertex weight is polynomially bounded, while sacrificing only a factor of~$(1+\varepsilon)$ in the approximation guarantee compared to the original weights. Specifically, we set~$w_v' = \lfloor w_v / M \rfloor$, where~$M = \smash{\frac{\varepsilon W}{n + n^2 \delta}}$ and~$W = \max_v w_v$.
Observe that~$w_v' \leq \lfloor W / M \rfloor \in O_\varepsilon(n^2)$, ensuring that the weights~$w'$ are polynomially bounded. Let~$\opt'$ denote the value of an optimal solution for the rounded instance. Note that~$M \cdot \opt' \leq \opt$ holds.
Given an~$\alpha$-approximation for the rounded instance, we apply this solution to the original instance with weights~$w$. Let~$\Length^D_v$ denote the ($D$-delayed) latency of vertex~$v$ in this solution.
Then, we have
\[ \sum_{v \in V^*} w_v \Length^D_v \leq M \sum_{v \in V^*} (w_v' +1) \Length^D_v \leq M \alpha \opt' + M \sum_{v \in V^*} \Length^D_v \leq \alpha \opt + M \sum_{v \in V^*} \Length^D_v \ .\]

Since the instance is~$\delta$-bounded,~$\delta D$ is an upper bound on the distance between any two vertices.
Hence, we have~$\Length^D_v \leq D + n \delta D = (1+ n \delta) D$ for any~$v \in V^*$.\
Therefore, we obtain
\[ M \sum_{v \in V^*} \Length^D_v \leq M n (1 + n \delta) D = \varepsilon W D \leq \varepsilon \opt \ .\]
Combining the two inequalities, we obtain 
\[ \sum_{v \in V^*} w_v \Length^D_v \leq \alpha \opt + \varepsilon \opt \leq (1 + \varepsilon) \alpha \opt \ .\]

After rounding the instance to obtain an equivalent one with weights in~$\{0,1\}$, each vertex~$v$ with weight~$w > 1$ can be replaced by a clique containing~$w$ copies of~$v$. In this clique, each vertex has a weight of~$1$, and the edges within the clique have a length of~$0$. This reduction can be performed in polynomial time, as the weights are polynomially bounded.
\end{proof}

Combining \Cref{lem:euclidean-red1} and \Cref{lem:rounding} imply \Cref{lem:reduction:weighted-01}.
Finally, \Cref{thm:weighted} follows from \Cref{lem:0-1-weights} and \Cref{lem:reduction:weighted-01}.

\section{The Euclidean Case}
\label{sec:ptas}

In this section, we show the following theorem.

\begin{theorem}\label{thm:PTAS}
Let~$d\in\mathbb{N}$ be a constant. On~$d$-dimensional Euclidean graphs, there exists a polynomial-time approximation scheme (PTAS) for the expanding search problem.
\end{theorem}

Our approach consists of three steps. 
The first two steps involve reductions inspired by Sitters~\cite{sitters2021polynomial}. 
In the first step, we reduce the expanding search problem to the~$\delta$-bounded expanding search problem with weights in~$\{0, 1\}$, as described in the previous section.
In the second step, we further reduce the~$\delta$-bounded problem to another problem, referred to as the~$\kappa$-segmented expanding search problem, for some constant~$\kappa \in \N$, while retaining weights in~$\{0, 1\}$. 
Finally, in the third step, we design a PTAS for solving the~$\kappa$-segmented expanding search problem in the Euclidean case, building upon ideas by Arora~\cite{Arora98} and Sitters~\cite{sitters2021polynomial}.

The~$\delta$-bounded expanding search problem was formally introduced in \Cref{sec:weighted}.
Next, we define the~$\kappa$-segmented expanding search problem.
Its input is as for the expanding search problem; in particular, the parameter~$\kappa$ is not part of the input.
The solution also contains~$\kappa+1$ additional time steps~$\smash{0 = t^{(0)} \leq t^{(1)} \leq \dots \leq t^{(\kappa)}}$.
For each vertex~$v \in V$, its rounded search time is given by
$$
\bar{\Length}_v(\sigma) =
\min \left\{ t^{(i)} : 0 \leq i \leq \kappa,\, \Length_v(\sigma) \leq t^{(i)} \right\} \ . 
$$
The objective is to minimize the total rounded search time,~$\smash{\bar{\Length}(\sigma) = \sum_{v \in V^*} w_v \bar{\Length}_v(\sigma)}$. Given~$\sigma$, we call its maximal prefix of total length at most~$\smash{t^{(1)}}$ segment~$1$ (of~$\sigma$). For~$i\geq 2$, we call the part of the maximal prefix of total length at most~$\smash{t^{(i)}}$ that is not part of segment~$i'$ for any~$i'<i$ segment~$i$ (of~$\sigma$).

\subsection{Reducing the~$\delta$-Bounded Expanding Search Problem to the~$\kappa$-Segmented Expanding Search Problem}

The following lemma can be proven analogously to a lemma of Sitters~\cite{sitters2021polynomial}.

\begin{lemma}[See~\cite{sitters2021polynomial}, Lemma 2.14]
\label{lem:ESP-PTAS:reduction-segmented-ESP}
Consider any class of graphs with edge lengths~$\mathcal{C}$, any class of weights~$\mathcal{W}$, and any constants~$\alpha>1$,~$\delta>0$, and~$\eps\in(0,1]$.
If there exists a polynomial-time~$\alpha$-approximation algorithm for the~$\kappa$-segmented expanding search problem for each constant~$\kappa\in\N$ on~$\mathcal{C}$ with weights~$\mathcal{W}$, then there exists a polynomial-time~$(\alpha+\eps)$-approximation algorithm for the~$\delta$-bounded expanding search problem~$\mathcal{C}$ with weights~$\mathcal{W}$.
\end{lemma}
\begin{proof}
    As shown in the proof of Lemma~\ref{lem:euclidean-red1-sub2}, there exists a~$(1 + \varepsilon)$-approximate solution that completes within time~$(1 + \delta)D$, where~$D$ denotes the delay of the given instance.
    Consider time points~$\smash{t^{(q)} = (1 + \varepsilon)^q D}$ for~$q = 1, 2, ..., \kappa$, where~$\kappa$ is such that~$(1 + \varepsilon)^\kappa \geq (1 + \delta)$.
    Note that~$\kappa$ is constant.
    Now an~$\alpha$-approximate solution for the~$\kappa$-segmented version of the instance can be converted into a~$((1 + \varepsilon)^2 \alpha)$-approximate solution for the original instance.
    The first multiplicative loss of~$(1 + \varepsilon)$ in the objective is due to the first statement of this proof, and the second multiplicative loss of~$(1 + \varepsilon)$ is since we have rounded search times, and two consecutive rounded search times satisfy~$\smash{(1 + \varepsilon) t^{(q)} =  t^{(q+1)}}$.
\end{proof}


\subsection{A PTAS for the~$\kappa$-Segmented Expanding Search Problem in the Euclidean Case}
\label{sec:PTAS-euclidean}
Sitters~\cite{sitters2021polynomial} observed that, on Euclidean graphs, the QPTAS for the pathwise search problem~\cite{AroraK03} (which builds on the well-known PTAS by Arora for TSP~\cite{Arora98}) can be transformed into a PTAS for the segmented version of the pathwise search problem.
In this section, we note that a similar, adapted approach provides a PTAS for the Euclidean segmented expanding search problem with weights in~$\{0,1\}$.
We focus on the two-dimensional case, though extending the approach to the~$d$-dimensional case for constant~$d$ is straightforward.
While the following description is self-contained, familiarity with Arora's PTAS~\cite{Arora98} may still be beneficial.

\medskip
{\bf Setup.}
The core of our PTAS for the segmented expanding search problem is the dynamic-pro\-gramming procedure.
However, several preprocessing steps are performed before invoking this procedure.
First, consider the smallest axis-aligned square that contains the root and all weight-$1$ vertices from the input, denoted as~$S_0$, with side length~$\lambda_0$.
Note that~$\lambda_0$ provides a lower bound on the cost of an optimal solution.
However, an optimal solution is not necessarily entirely contained within~$S_0$ since it may utilize a weight-$0$ vertex outside the square as a Steiner vertex.
Therefore, we enlarge~$S_0$ from its center by a factor of~$3n^2 + 1$, yielding a new square~$S$ with side length~$\lambda = (3n^2 + 1)\lambda_0$.
This scaling factor is chosen to include all vertices whose distance from~$S_0$ is at most~$\sqrt{2}n^2\lambda_0$. 
Importantly, there exists an optimal solution that is entirely contained within~$S$, as a trivial upper bound on the cost of the optimal solution is~$\sqrt{2}n^2\lambda_0$, obtained by connecting all weight-$1$ vertices to~$r$, since the distance from~$r$ to any other vertex is bounded by~$\sqrt{2}\lambda_0$.
Thus, we can safely disregard all input vertices located outside~$S$.

\paragraph{Round the instance}
We place a grid of granularity~$g\in\Theta(\varepsilon \lambda / n^4)$ within~$S$ and move each vertex to its closest grid point. 
Note that, in this process, multiple vertices may end up being relocated to the same grid point.
As shown in the literature~\cite{AroraK03}, any solution for the rounded instance can be transformed into a solution for the original instance with an additional cost of~$O(\varepsilon) \cdot \textsc{Opt}$ in the objective-function value.
This additional cost arises because the movement of each vertex incurs an extra cost of~$O(\varepsilon \lambda / n^3)$, which can be charged to the objective function since the objective is~$\Omega(\lambda / n^2)$ by the construction of~$S$.

\paragraph{Build random quadtree}
We first obtain an even larger square from~$S$ by enlarging it by an additional factor of~$2$ from its center.
Subsequently, we shift this square to the left by a value~$a$ chosen uniformly at random from~$\{-\lambda/2, -\lambda/2+g, \dots, \lambda/2-g, \lambda/2\}$ and to the top by a value~$b$, which is chosen independently of~$a$ from the same set.
Note that, regardless of the values of~$a$ and~$b$, the resulting square~$S'$ always contains~$S$.

We partition~$S'$ into four equal-sized squares, which are then recursively partitioned in the same manner until each square contains only a single grid point with at least one vertex.
This recursive partitioning naturally gives rise to a quadtree, where each square (referred to as a cell in the following) corresponds to a node in the tree.
A node is designated as a child of another node if its square is one of the four smaller squares within the parent node's square.
The quadtree is rooted at the node corresponding to~$S'$.
Due to the rounding step, the minimum distance between any two vertices not located at the same grid point is~$\Theta(\eps \lambda/n^4)$, implying that the quadtree has a depth of~$O(\log \lambda)$.

\paragraph{Derandomization}
We remark that the randomization is introduced solely for a more straightforward analysis.
In fact, we can derandomize our algorithm in the same manner as Arora's PTAS and its variants:
Specifically, we try all polynomially many possible values for the random variables~$a$ and~$b$ and output the cheapest solution obtained among these.

\medskip
{\bf Portal-Respecting Solutions and the Structural Result.}
We employ a dynamic programming approach to obtain the desired solution.
The set of solutions over which the dynamic-programming procedure optimizes consists of so-called portal-respecting solutions.
These solutions are restricted to crossing cell boundaries only at predefined locations called portals, and they do so at most~$O(1/\eps)$ times per cell in total.
For each cell, we place~$\Theta(\log n/\eps)$ equidistant portals along each side of the cell, spanning from corner to corner and including the corners.
Furthermore, each cell inherits all portals from its ancestor cells in the quadtree.

The following structural result asserts that restricting to portal-respecting solutions incurs only a~$(1 + O(\eps))$ factor increase in the cost.

\begin{lemma}\label{thm:Arora-structure}
    With constant probability (over the random placement of the quadtree), there exists a~$(1+O(\eps))$-approximate portal-respecting solution.
\end{lemma}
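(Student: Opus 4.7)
The plan is to closely follow Arora's structural theorem for TSP \cite{Arora98} as adapted to the traveling repairperson problem by Arora and Karakostas \cite{AroraK03} and to its segmented version by Sitters \cite{sitters2021polynomial}, modifying the argument to work for trees (rather than tours) and for the $\kappa$-segmented latency objective. Start with an optimal search tree $T^*$ for the (rounded) instance; by the setup, we may assume $T^*$ is entirely contained in $S$. I will transform $T^*$ into a portal-respecting tree $T'$ in two stages and then bound the resulting increase in objective value.

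In the first stage (\emph{patching}), I traverse the quadtree bottom-up and, for every cell side that is crossed by the current tree more than $m=O(1/\eps)$ times, apply Arora's patching lemma: replace the crossings by at most $m$ crossings plus Steiner edges running along the side, at a length cost that is $O(1)$ times the side length. In the second stage (\emph{portal-snapping}), I move every remaining boundary crossing to its nearest portal, incurring at most $O(\eps L_c/\log n)$ additional length per crossing at a cell of side length $L_c$. Arora's standard charging argument applied to the random offsets $a,b$ of the quadtree then shows that the expected total added length, summed over both stages and all levels, is $O(\eps)\cdot \length(T^*)$. Since in a $\delta$-bounded instance $\length(T^*)=O(\delta L)$ and the optimum objective is $\Omega(L)$, this total increase is $O(\eps)\cdot \opt$.

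The main subtlety, compared to Arora's TSP argument, is that in ESP the individual search times matter, not just the total edge length. To control latencies I schedule the Steiner detours introduced by patching \emph{late}: within each patched subtree the extra zero-weight edges are traversed after all positive-weight vertices of that subtree have been explored. This is always feasible because the Steiner additions carry no weight and the tree structure allows reordering of zero-weight edges to the end of the subtree's exploration. With this scheduling, the latency delay imposed on any vertex $v$ is at most the length added in cells of the quadtree that contain $v$ or are its ancestors, and the same charging as for total length shows that the expected increase in $\sum_v w_v \Length_v$ is $O(\eps)\cdot \opt$. Finally, the rounding to $\kappa+1$ time points in $\kappa$-segmented ESP is monotone, so this translates into at most a $(1+O(\eps))$-factor increase in the segmented objective $\bar{C}(T')$.

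The expected-value bound, together with Markov's inequality and a suitable rescaling of~$\eps$, yields the stated constant success probability. The hard step is the latency-control argument in the third paragraph: unlike in TSP, bounding total patching length does not immediately bound the objective, and the key idea is to exploit that the patching edges are all Steiner (weight-zero) and can therefore be postponed in the exploration order of each patched subtree, letting the level-by-level length charge translate directly into a latency charge.
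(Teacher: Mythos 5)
Your proposal diverges from the paper's argument in a way that opens a genuine gap. The paper proves this lemma by applying Arora's structural result~\cite{Arora98} \emph{separately to each of the $\kappa$ segments}, exactly as in~\cite{AroraK03}: since the $\kappa$-segmented objective depends only on the completion times $t^{(1)},\dots,t^{(\kappa)}$, and each of these is a sum of segment lengths, it suffices to show that each segment can be made portal-respecting while its \emph{length} grows by a factor $1+O(\eps)$ in expectation; the time points of the modified solution are then reset to the new completion times and every $\bar{C}_v$ grows by at most a $1+O(\eps)$ factor. No control of individual latencies is needed --- that is precisely what the segmented formulation buys. You instead try to bound the increase of the true latencies $C_v$, and invoke segmentation only in the last sentence via a ``monotonicity'' remark that does not work as stated: the map $C_v\mapsto\bar{C}_v$ is monotone but not Lipschitz, so a small latency increase can push a vertex into the next segment; one must redefine the $t^{(i)}$, which leads back to the per-segment length argument anyway.

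The concrete gap is in your third paragraph. Bounding the total added length by $O(\eps)\length(T^*)$ does not bound the weighted sum of latency increases by $O(\eps)\cdot\mathrm{OPT}$: a unit of length added early delays every vertex explored after it, so $\sum_v w_v\,\Delta C_v$ can in general be as large as the total weight times the added length, and that is not $O(\mathrm{OPT})$. Your proposed fix --- postponing the Steiner patching edges of each cell until after all positive-weight vertices ``of that subtree'' --- does not yield the claimed bound that $v$ is delayed only by length added in cells containing $v$: edges scheduled after the weighted vertices of a cell delay exactly the vertices explored \emph{after} that point, which are typically outside the cell, and an expanding search pattern enters and leaves a cell many times, so there is no contiguous subtree exploration to append them to. Making such a charging work would essentially resolve the latency-structure difficulty that forced~\cite{AroraK03} to settle for a QPTAS and motivated the segmentation of~\cite{sitters2021polynomial} in the first place. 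To repair the proof, drop the latency-control argument and apply the patching and portal-snapping to each segment in isolation, bounding only each segment's length.
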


The result can be proved in exactly the same way as in~\cite{AroraK03} by applying Arora's structural result~\cite{Arora98} to each segment.
Although~\cite{AroraK03} considers the pathwise version of our problem, this difference does not affect the proof.

\medskip
{\bf Further Setup.}
Before we describe the dynamic program, we need two additional setup steps.

\paragraph{Additional rounding}
Since we are going to guess the lengths of parts of the solution, we assume that the distance between any two relevant points (i.e., vertices or portals) is a polynomially bounded integer. To see that this is only at the loss of another~$1+O(\eps)$ factor, recall that~$\lambda_0$ is a lower bound on the optimal objective-function value, and observe that the length of the cheapest portal-respecting solution can be viewed as a sum of~$\smash{n^{O(1)}}$ such relevant distances. At the loss of a~$1+O(\eps)$ factor in the objective-function value, we can thus afford to round all these distances to multiples of~$\smash{\eps\lambda_0/n^{O(1)}}$. Since all these distances are bounded by~$\lambda$, which is within~$\smash{n^{O(1)}}$ of~$\lambda_0$, we obtain~$\smash{n^{O(1)}}$ possible distances, and the claim follows by rescaling. 

\paragraph{Enumeration of segment lengths}
It will be useful to know the rounded latencies~$\smash{t^{(1)},\dots,t^{(\kappa)}}$ before running the dynamic-programming procedure.
By our rounding procedure, we know that there are only~$\smash{n^{O(1)}}$ options for each of these~$O(1)$ lengths, meaning that there are~$\smash{n^{O(1)}}$ combinations of different latencies for each of the segments, which we can guess.

\medskip
{\bf Dynamic Program.}
For each cell~$z$ of the quadtree, we additionally determine the following pieces of information relevant to the other quadtree cells (which is reflected in the fact that there is a DP entry for each combination).
Specifically, for each segment~$i \in [\kappa]$, we determine:

\begin{enumerate}[label=\arabic*),leftmargin=*]
	\item the total length~$\ell_i$ of segment~$i$ within the cell,
	\item the number~$m_i$ of times the segment crosses the boundary of the cell, and for each~$j \in [m_i]$ of these crossings, a \emph{type}~$\tau_{i,j}$ for the~$j$-th such crossing, containing:
	\begin{itemize}
    	\item the portal~$p_{i,j}$ at which the cell is intersected, and
    	\item whether the segment enters or leaves the cell at~$p_{i,j}$.
	\end{itemize}
\end{enumerate}

It is important to note that for each of these parameters, there are only a polynomial number of possible options.
Specifically, we can assume that~$m_i$ is at most~$O(1/\eps)$, and the number of possible types for each crossing is at most~$O(\log n/\eps)$.
This implies that there are only a polynomial number of DP entries.
Each DP entry~$$\mathrm{DP}\left[z,\left(\ell_i,(\tau_{i,j})_{j\in[m_i]}\right)_{i \in [\kappa]}\right]$$ is designed to store the cost of the cheapest portal-respecting solution confined to the corresponding cell. This solution must adhere to the constraints imposed by the chosen parameters and visit all the vertices within the cell.
It is possible that such a solution does not exist—for example, if the cell contains no root but contains other vertices, and no segment ever enters the cell.
In such cases, the cost is~$\infty$.
However, if a solution does exist, the cost refers to the sum of the latencies for all vertices in the cell, which are determined by the segment visiting each vertex.

With this definition, the entry~$\smash{\mathrm{DP}\bigl[z_0,\bigl(t^{(i)} - \sum_{i' < i} t^{(i')}, (\cdot)\bigr)_{i \in [\kappa]}\bigr]}$ is intended to store the cost of the optimal portal-respecting solution, where~$z_0$ is the root of the quadtree, and~$(\cdot)$ denotes the empty tuple.
Using standard techniques, the actual solution can be reconstructed from these entries.

The entries of the DP can be filled in a bottom-up fashion.
Specifically, an entry~$\smash{\mathrm{DP}\bigl[z,\left(\ell_i,(\tau_{i,j})_{j\in[m_i]}\right)_{i \in [\kappa]}\bigr]}$ where~$z$ is a leaf of the quadtree can be computed easily.
If the cell does not contain the root but contains at least one other vertex (with all vertices located at a common point), and there are no incoming crossings, we set the DP entry to~$\infty$.
Otherwise, we make a guess to determine which segment~$i_0$ and which incoming crossing~$j_0 \in [m_i]$ connect to the vertices.
If the cell contains a vertex or the root (in which case we set~$i_0 = 0$), we guess which of the outgoing crossings~$j > j_0$ of segment~$i_0$ and which outgoing crossing from later segments connect to the vertex. 

We then guess a one-to-one correspondence between the remaining incoming and outgoing crossings for each segment, ensuring that each incoming crossing is paired with an outgoing crossing of a later index.
The corresponding portals are then connected.
If no such correspondence exists for any segment (e.g., because the number of remaining incoming and outgoing crossings is unequal), we discard this guess.
Similarly, we discard the guess if the resulting length of segment~$i$ within~$z$ does not match~$\ell_i$.
Among the valid options, we store the lowest cost in the DP entry.
If no valid solution is found, we set the cost to~$\infty$.
Note that we are only considering a polynomial number of guess combinations.

To compute a DP entry
$\smash{\mathrm{DP}\bigl[z,(\ell_i,(\tau_{i,j})_{j\in[m_i]})_{i\in[\kappa]}\bigr]}$
for a non-leaf node~$z$, we use previously computed entries for the children of~$z$:
\begin{align*}
&z^{\text{TL}} \quad \text{(top-left)}, & &z^{\text{TR}} \quad \text{(top-right)},\\
&z^{\text{BL}} \quad \text{(bottom-left)}, &   &z^{\text{BR}} \quad \text{(bottom-right)}.
\end{align*}
For each segment~$i \in [\kappa]$, we first guess nonnegative integers~$\ell_i^{\text{TL}}$,~$\ell_i^{\text{TR}}$,~$\ell_i^{\text{BL}}$, and~$\ell_i^{\text{BR}}$ such that the total length of the segment is preserved:~$\ell_i = \ell_i^{\text{TL}} + \ell_i^{\text{TR}} + \ell_i^{\text{BL}} + \ell_i^{\text{BR}}$.

For the crossings, note that~$(\tau_{i,j})_{i\in[\kappa], j\in[m_i]}$ already specify the crossings (and their types) for the sides of the children cells that align with the sides of~$z$ (the outer boundaries). However, these do not determine the crossings along the inner boundaries between the children cells. Therefore, we guess the crossings and their types for the inner boundaries and determine the order in which these crossings occur, ensuring consistency with the ordering of the crossings along the outer boundaries.
These guesses result in the crossings
\begin{align*}
(\tau_{i,j}^\text{TL})_{i \in [\kappa], j \in [m_i^{\text{TL}}]}, (\tau_{i,j}^\text{TR})_{i \in [\kappa], j \in [m_i^{\text{TR}}]}, (\tau_{i,j}^\text{BL})_{i \in [\kappa], j \in [m_i^{\text{BL}}]}, \text{ and } (\tau_{i,j}^\text{BR})_{i \in [\kappa], j \in [m_i^{\text{BR}}]}
\end{align*}
for the children cells. 
Note that a single guessed crossing of an inner boundary leads to two crossings (one outgoing and one incoming) for each of the children cells.
Finally, we store the minimum value of the following sum in the DP entry~$\mathrm{DP}[z,(\ell_i,(\tau_{i,j})_{j \in [m_i]})_{i \in [\kappa]}]$:
\begin{align*}
&\mathrm{DP}\left[z^{\text{TL}},\left(\ell_i^{\text{TL}},\left(\tau_{i,j}^{\text{TL}}\right)_{j\in[m_i^{\text{TL}}]}\right)_{i\in[\kappa]}\right]+
\mathrm{DP}\left[z^{\text{TR}},\left(\ell_i^{\text{TR}},\left(\tau_{i,j}^{\text{TR}}\right)_{j\in[m_i^{\text{TR}}]}\right)_{i\in[\kappa]}\right]\\
+\;&\mathrm{DP}\left[z^{\text{BL}},\left(\ell_i^{\text{BL}},\left(\tau_{i,j}^{\text{BL}}\right)_{j\in[m_i^{\text{BL}}]}\right)_{i\in[\kappa]}\right]+
\mathrm{DP}\left[z^{\text{BR}},\left(\ell_i^{\text{BR}},\left(\tau_{i,j}^{\text{BR}}\right)_{j\in[m_i^{\text{BR}}]}\right)_{i\in[\kappa]}\right] \ .
\end{align*}
This is the minimum cost obtained through these guesses. Again, note that only polynomially many combinations of guesses are considered. This completes the description of the dynamic program.

By the correctness of this dynamic program and \Cref{thm:Arora-structure}, we obtain a PTAS for the~$\kappa$-segmented expanding search problem on Euclidean graphs with binary weights. Using \Cref{lem:ESP-PTAS:reduction-segmented-ESP} and \Cref{lem:reduction:weighted-01}, we then obtain \Cref{thm:PTAS}.

\section{Hardness of Approximation}
\label{sec:hardness}
This section is dedicated to the following theorem. 

\begin{theorem}\label{thm:hardness}
    There exists some constant~$\eps>0$ such that there is no polynomial-time~$(1+\eps)$-approximation algorithm for the expanding search problem, unless~$\P=\NP$.
\end{theorem}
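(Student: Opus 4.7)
Following the blueprint of Blum et al.'s hardness proof for the Traveling Repairperson Problem, the plan is to reduce from an NP-complete variant of the Hamiltonian-path problem (e.g., Hamiltonian path with a prescribed start vertex in graphs of maximum degree $3$) to ESP. Given an input $(G,r)$ with $G = (V, E)$, I would construct in polynomial time an ESP instance $I(G)$ on a vertex set containing $V$ (possibly augmented by auxiliary vertices), with edge lengths and vertex weights chosen so that (i)~if $G$ has a Hamiltonian path from $r$, then $I(G)$ admits an expanding search pattern of cost at most some value $A$, obtained by sequentially clearing the edges of this path; and (ii)~if $G$ has no such Hamiltonian path, every expanding search pattern has cost at least $(1+\eps)A$ for some constant $\eps > 0$.

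The key conceptual difficulty, absent in the TRP setting, is that ESP does not incur any cost for re-traversing a cleared edge: once $n-1$ edges have been cleared, the entire spanning tree has been built, and only the lengths of these edges (and the order in which they are added) enter the objective. Thus the standard TRP argument---that a non-Hamiltonian tour must ``pay twice'' for some detour---does not apply directly. To circumvent this, the gadget must embed an auxiliary ``tax'' whose contribution to the objective, weighted by the number of as-yet-unexplored vertices at the time it is paid, differs by a constant factor between Hamiltonian and non-Hamiltonian $G$. A natural realization is to attach auxiliary heavy-weight vertices whose latency depends on whether the ESP can postpone the corresponding long edges to the end of the pattern, something only a Hamiltonian structure facilitates.

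The argument then unfolds in the usual two-step fashion: first, a yes-case lemma exhibits an ESP of cost $A$ that uses the Hamiltonian path as its spanning tree via explicit computation of $\sum_{j=1}^{n-1} (n-j) \length_{e_j}$; second, a no-case lemma shows a lower bound of $(1+\eps)A$ on the cost of \emph{any} ESP on $I(G)$ when $G$ has no Hamiltonian path from $r$. I expect the main obstacle to be the no-case lemma: the ESP's freedom to pick both the spanning tree and the clearing order---with only the precedence constraint that the cleared edges form a growing tree at each step---means many patterns must be ruled out. Overcoming this requires a structural lemma establishing that non-Hamiltonian topology in $G$ forces some ``long'' edge of the gadget to be cleared while many heavy vertices remain unexplored, contributing an $\eps$-fraction overhead to the total latency. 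This structural statement replaces the backtracking argument of Blum et al. and is the crux of adapting their idea to the expanding-search setting.
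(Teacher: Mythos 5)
There is a genuine gap here, and it lies in your choice of source problem. You propose to reduce from a Hamiltonian-path \emph{decision} problem and to manufacture a constant multiplicative gap yourself: cost at most $A$ in the yes-case versus at least $(1+\eps)A$ in the no-case. No natural construction can deliver that no-case lemma, and you correctly identify it as the crux without supplying it. The reason it fails is structural: a graph with no Hamiltonian path from $r$ may still contain a path from $r$ covering all but one vertex, and the resulting expanding search pattern has total latency within a $1+O(1/n)$ factor of the Hamiltonian one (the objective is a sum of $n$ latencies, and mishandling $O(1)$ vertices perturbs it by a vanishing fraction). So for every constant $\eps>0$ the claimed gap is violated on such instances. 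Turning a gap-free $\NP$-complete decision problem into a constant-gap instance is tantamount to proving a PCP-type inapproximability result from scratch; this is exactly why one must instead start from a problem that is already hard to approximate.

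The paper's proof does precisely that: it reduces from \textsc{SteinerTree(1,2)}, which is $\mathsf{MaxSNP}$-hard (Bern and Plassmann), so there is a constant $\rho>0$ such that no polynomial-time $(1+\rho)$-approximation exists unless $\P=\NP$. The construction takes $k$ copies of the \textsc{ST(1,2)} instance, attaches every vertex to a new root by edges of length $a=2(|T|-1)$, and gives each terminal weight $1/|T|$ and every other vertex weight $0$; Steiner tree is the right source precisely because ESP solutions are trees, which also dispenses with your worry about re-traversal being free. A structural lemma shows any $\beta$-approximate search pattern can be made ``structured'' (each copy entered once, explored consecutively) without increasing cost, after which the copy explored most cheaply yields a Steiner tree of cost at most $\frac{k+1}{k-1}(3\beta-2)\cdot\opt_{\text{ST}}(I)$; choosing $\beta<1+\rho/3$ and $k$ large contradicts the hardness of \textsc{ST(1,2)}. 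The $k$-copy amplification is what converts an approximation guarantee for ESP into one for the Steiner problem, and it has no counterpart in your proposal. To salvage your plan you would at minimum have to replace Hamiltonian path by an $\mathsf{APX}$-hard optimization problem (e.g., \textsc{TSP(1,2)} or \textsc{SteinerTree(1,2)}) and add such an amplification step.
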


The hardness result for the expanding search problem follows from a reduction from a variant of the Steiner tree problem, which is defined as follows. 
Given a graph~$G=(V,E)$ with non-negative edge lengths and a set~$T\subseteq V$ of vertices, the so-called terminals, the Steiner tree problem on graphs asks for a minimum-length tree that is a subgraph of~$G$ and that
contains all vertices in~$T$.
The variant that we consider and use is the so-called \textsc{SteinerTree(1,2)}, short \textsc{ST(1,2)}, where~$G$ is a complete graph, and all edge lengths are either~$1$ or~$2$.
Bern and Plassmann~\cite{bern1989steiner} showed the following theorem.

\begin{theorem}[Theorem 4.2 in~\cite{bern1989steiner}] \label{thm:steiner12}
    \textsc{SteinerTree(1,2)} is~$\mathsf{MaxSNP}$-hard.
\end{theorem}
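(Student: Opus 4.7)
The plan is to prove MaxSNP-hardness of \textsc{SteinerTree(1,2)} via an $L$-reduction from \textsc{Vertex Cover} on graphs of bounded maximum degree, a problem known to be $\mathsf{MaxSNP}$-hard (Papadimitriou and Yannakakis). Fixing a constant degree bound $B$ (e.g.\ $B=4$), I would let the source instance be any graph $G=(V,E)$ of maximum degree at most $B$, and produce a \textsc{ST(1,2)} instance as follows. Take the complete graph $H$ on vertex set $V \cup E$ (treating edges of $G$ as vertices of $H$), designate $T \coloneqq E$ as the terminals, and set edge costs by: $\cost(\{e,u\}) = 1$ if $u \in V$ is an endpoint of $e \in E$ in $G$, $\cost(\{u,v\}) = 1$ for all $u,v \in V$, and $\cost=2$ on every other pair. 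All quantities are polynomial-size, so the reduction is polynomial.

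The next step is a structural lemma identifying the optimum of the constructed instance: $\opt_{\textsc{ST}}(H) = |E| + \tau(G) - 1$, where $\tau(G)$ is the minimum vertex cover of $G$. For the upper bound, any vertex cover $C \subseteq V$ of size $\tau(G)$ yields a Steiner tree consisting of a spanning tree on $C$ (using cost-1 $V$--$V$ edges, contributing $\tau(G)-1$) together with one cost-1 edge from each $e \in E$ to some endpoint of $e$ lying in $C$ (contributing $|E|$). For the lower bound, I would first argue that any Steiner tree can be transformed without increasing its cost into one that uses only cost-1 edges: an edge $\{e,f\}$ of cost $2$ can be rerouted via any endpoint of $e$ at cost $1+1=2$, and similarly for $\{e,u\}$ with $u\notin e$; this uses only cost-$1$ edges at the same total cost. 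In such a tree, the set $C$ of $V$-vertices used forms a vertex cover of $G$ (each terminal $e\in E$ has a cost-$1$ neighbor in $V$, which must be an endpoint of $e$), and the tree has exactly $|E|+|C|-1$ edges, giving cost $|E|+|C|-1\geq |E|+\tau(G)-1$.

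With this identity I would then verify the two inequalities needed for an $L$-reduction with constants $\alpha,\beta>0$. For the first ($\opt_{\textsc{ST}}(H) \leq \alpha\cdot \opt_{\textsc{VC}}(G)$), I would use the degree bound: $|E|\leq B|V|/2$ and $\tau(G) \geq |E|/B$, so $|E|\leq B\cdot \tau(G)$ and hence $\opt_{\textsc{ST}}(H) = |E|+\tau(G)-1 \leq (B+1)\cdot \opt_{\textsc{VC}}(G)$. For the second, given any feasible Steiner tree $S$ in $H$ of cost $c$, I apply the rerouting transformation above to obtain a cost-$c$ tree using only cost-1 edges (the transformation is algorithmic and cost-nonincreasing), and extract the set $C_S$ of its $V$-vertices as a vertex cover of $G$. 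Then $|C_S| = c - |E| + 1$, so $|C_S| - \tau(G) = c - \opt_{\textsc{ST}}(H)$, giving $\beta = 1$ in the L-reduction inequality.

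The main obstacle I anticipate is the rerouting argument that produces a solution using only cost-$1$ edges at no extra cost, which has to be done carefully so that the resulting graph remains a tree spanning all terminals (one has to avoid creating cycles and track how $V$-vertices are introduced or removed). Once that step is in place, combining the two inequalities with the MaxSNP-hardness of bounded-degree \textsc{Vertex Cover} yields $\mathsf{MaxSNP}$-hardness of \textsc{SteinerTree(1,2)}.
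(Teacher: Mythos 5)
The paper does not actually prove this statement: it imports it verbatim from Bern and Plassmann (Theorem~4.2 of the cited reference), so there is no in-paper proof to compare against. Your proposal reconstructs the approach of that original proof (an L-reduction from bounded-degree \textsc{Vertex Cover} with the gadget on $V\cup E$, terminals $E$, cost $1$ between an edge and its endpoints and between pairs of vertices of $V$, cost $2$ otherwise), which is the right idea. However, the step you yourself flag as the main obstacle genuinely fails as you state it, and it takes the structural lemma down with it. A cost-$2$ edge $\{e,f\}$ between two terminals cannot in general be rerouted ``via any endpoint $u$ of $e$ at cost $1+1=2$'': the edge $\{u,f\}$ has cost $1$ only if $u$ is also an endpoint of $f$, so when $e$ and $f$ are disjoint edges of $G$ the detour costs $1+2=3$ (and the all-cost-$1$ detour $e$--$u$--$w$--$f$ with $w\in f$ costs $3$ as well). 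Consequently the identity $\opt_{\text{ST}}(H)=|E|+\tau(G)-1$ is false for disconnected $G$: if $G$ consists of two disjoint edges $e$ and $f$, then the single cost-$2$ edge $\{e,f\}$ is a feasible Steiner tree of cost $2$, while $|E|+\tau(G)-1=3$. The same example breaks your back-mapping: the optimal tree contains no $V$-vertices at all, so it yields no vertex cover, and the equality $|C_S|=c-|E|+1$ has nothing to apply to.

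To repair the argument you need two things. First, restrict the source problem to \emph{connected} bounded-degree graphs (where \textsc{Vertex Cover} remains $\mathsf{MaxSNP}$-hard), or otherwise exclude the counterexample above. Second, replace the rerouting by a correct extraction of a vertex cover from an arbitrary Steiner tree $S$: writing $c(S)=|E|+|C|-1+k_2$, where $C$ is the set of $V$-vertices of $S$ and $k_2$ the number of cost-$2$ edges, the natural cover consisting of $C$ together with one endpoint of each terminal having no endpoint-neighbor in $S$ has size at most $|C|+2k_2$ rather than $|C|+k_2$, because a single terminal--terminal cost-$2$ edge can be responsible for two uncovered terminals. Closing this factor-of-two slack requires a genuine local exchange argument (this is precisely what the original proof supplies); without it neither the lower bound on $\opt_{\text{ST}}(H)$ nor the L-reduction inequality with $\beta=1$ is established.
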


It was shown in~\cite{arora1998proof} that no polynomial-time approximation scheme exists for any~$\mathsf{MaxSNP}$-hard problem unless~$\P=\NP$. Hence, there exists some constant~$\rho>0$ such that there is no polynomial-time~$(1+\rho)$-approximation algorithm for \textsc{ST}(1,2), unless~$\P=\NP$.
We use this to show the hardness result for the expanding search problem.

The main idea of the proof of \Cref{thm:hardness} is as follows. Given a~$\beta$-approxima-tion algorithm~$\alg'$ for the expanding search problem for any~$\beta>1$, we construct a~$\gamma$-approximation algorithm~$\alg$ for \textsc{ST}(1,2) with~$\gamma < 1+\rho$. With the approximation hardness of \textsc{ST}(1,2), this contradicts the existence of a~$\beta$-approximation algorithm~$\alg'$ for the expanding search problem for any~$\beta>1$.

\medskip
{\bf Construction of the Expanding Search Instance.}
Let~$I=(G,T,(\ell_e)_{e\in E})$ be an instance of \textsc{ST}(1,2) on the undirected complete graph~$G=(V, E)$ with terminal set~$T\subseteq V$ with~$|T| \geq 2$ and edge lengths~$\ell_e\in \{1,2\}$ for all~$e\in E$. 
We construct an instance~$I'=(G',(w_v)_{v\in V'},(\ell'_e)_{e\in E'},r)$ of the expanding search problem as follows.
The graph~$G'=(V',E')$ consists of~$k$ copies~$G_1,\dots,G_k$ of~$G$ where all vertices are connected to an additional vertex~$r$, i.e., the root vertex of the the expanding search problem instance. 
The number~$k$ of copies will be determined later.
All edges within some copy~$G_i$ are assigned the same length as in the original instance~$G$. 
Edges incident to~$r$ have length~$a=2(|T| -1)$. 
Finally, all vertices corresponding to terminals in the original instance (later called terminal vertices) have weight~$1/|T|$ while all other vertices are assigned weight~$0$. 
Note that by this choice of weights, each copy of~$G$ has a total weight of 1.
We refer to \Cref{fig:construction-esp} for an illustration of the construction.

\begin{figure}[t]
\scriptsize
\centering
\begin{subfigure}[b]{.48\textwidth}
\begin{center}
\begin{tikzpicture}[xscale=0.5,yscale=0.5,shorten > = 0pt]
	\foreach \i in {1,...,5} {
    \node[state] (n\i) at ({360/5 * (\i - 1)}:3) {};
	}
	\node[state,\myred] (test) at ({360/5 * (2 - 1)}:3) {};
	\node[state,\myred] (test) at ({360/5 * (4 - 1)}:3) {};
	\node[state,\myred] (test) at ({360/5 * (1 - 1)}:3) {};
	\foreach \i in {1,...,5} {
	    \foreach \j in {\i,...,5} {
	        \ifnum\i<\j
	            \draw[thick, \myyellow] (n\i) -- (n\j);
	        \fi
	    }
	}
	\draw[thick, \myblue] (n1) -- (n3);
	\draw[thick, \myblue] (n2) -- (n3);
	\draw[thick, \myblue] (n4) -- (n3);
	\draw[thick, \myblue] (n5) -- (n2);
\end{tikzpicture}
\end{center}
\caption{}
\label{fig:construction-esp-1}
\end{subfigure}
\begin{subfigure}[b]{.48\textwidth}
\begin{center}
\begin{tikzpicture}[xscale=0.2,yscale=0.2,shorten > = 0pt]
	\begin{scope}[xshift=15cm]
			\foreach \i in {1,...,5} {
	    \node[state] (m\i) at ({360/5 * (\i - 1)}:3) {};
		}
		\node[state,\myred] (test) at ({360/5 * (2 - 1)}:3) {};
		\node[state,\myred] (test) at ({360/5 * (4 - 1)}:3) {};
		\node[state,\myred] (test) at ({360/5 * (1 - 1)}:3) {};
		\foreach \i in {1,...,5} {
		    \foreach \j in {\i,...,5} {
		        \ifnum\i<\j
		            \draw[thick, \myyellow] (m\i) -- (m\j);
		        \fi
		    }
		}
		\draw[thick, \myblue] (m1) -- (m3);
		\draw[thick, \myblue] (m2) -- (m3);
		\draw[thick, \myblue] (m4) -- (m3);
		\draw[thick, \myblue] (m5) -- (m2);
		\node[label=right:{$G_k$}] (Gk) at (-2,-5) {};
	\end{scope}
	\foreach \i in {1,...,5} {
    \node[state] (n\i) at ({360/5 * (\i - 1)}:3) {};
	}
	\node[state,\myred] (test) at ({360/5 * (2 - 1)}:3) {};
	\node[state,\myred] (test) at ({360/5 * (4 - 1)}:3) {};
	\node[state,\myred] (test) at ({360/5 * (1 - 1)}:3) {};
	\foreach \i in {1,...,5} {
	    \foreach \j in {\i,...,5} {
	        \ifnum\i<\j
	            \draw[thick, \myyellow] (n\i) -- (n\j);
	        \fi
	    }
	}
	\draw[thick, \myblue] (n1) -- (n3);
	\draw[thick, \myblue] (n2) -- (n3);
	\draw[thick, \myblue] (n4) -- (n3);
	\draw[thick, \myblue] (n5) -- (n2);
	\node[label=right:{$G_1$}] (G1) at (-2,-5) {};
	\node[label=right:{\large~$\dots$}] (dots) at (5,0) {};
	\node[state,label=above:{$s$}] (s) at (7.5,7) {};
	\foreach \j in {1,...,5} {
		            \draw[thick, \mygreen] (s) -- (m\j);
		    }
    \foreach \j in {1,...,5} {
            \draw[thick, \mygreen] (s) -- (n\j);
    }
\end{tikzpicture}
\end{center}
\caption{}
\label{fig:construction-esp-2}
\end{subfigure}
\caption{(a)~Instance~$I$ for \textsc{ST}(1,2). Terminal vertices are shown in \textcolor{\myred}{red}.
\textcolor{\myblue}{Blue} edges have length~$1$ and \textcolor{\myyellow}{orange} edges have length~$2$.
(b)~Instance~$I'$ for the expanding search problem constructed from~$I$.
\textcolor{\myred}{Red} vertices have weight~$\smash{\frac{1}{3}}$ and black vertices have weight~$0$.
\textcolor{\myblue}{Blue} edges have length~$1$, \textcolor{\myyellow}{orange} edges have length~$2$, and \textcolor{\mygreen}{green} edges have length~$a=2(|T|-1)$.}
\label{fig:construction-esp}
\end{figure}

\medskip
We make the following observation. Any feasible Steiner tree for~$I$ consists of at least~$|T|-1$ edges, each having a length of 1 or 2. 
Hence, we can lower-bound the length of an optimal Steiner tree for~$I$ by~$|T|-1$.
However, since~$G$ is a complete graph, choosing a spanning tree that only uses edges from~$E[T]$ gives an upper bound for the length of an optimal Steiner tree of~$2(|T|-1)$.
This yields
\begin{align} \label{eq:ah-bound-a}
    \opt_\text{ST}(I)\leq a\leq 2 \,\opt_\text{ST}(I)\ , 
\end{align}
where~$\opt_\text{ST}(I)$ denotes the length of an optimal Steiner tree solution on~$I$.

To show the main result, we make some assumptions on the expanding search pattern~$\sigma_\alg$ obtained from the~$\beta$-approximation algorithm~$\alg'$ for the expanding search pattern on instance~$I'$.
For this manner, we call~$\sigma_\alg$ \emph{structured} if each copy~$G_i$ is connected to the root by precisely one edge in~$\sigma_\alg$ and all edges belonging to some copy~$G_i$ or connecting~$G_i$ to the root are consecutive within~$\sigma_\alg$.
The following lemma states that we can always obtain a structured expanding search pattern from the~$\beta$-approximation~$\sigma_\alg$ in polynomial time that maintains the approximation ratio.
Note that, with~$a$ chosen large enough, an analogous statement would be much easier to show for the pathwise search problem since the searcher would need to cross edges of length~$a$ more often if edges from the same copy are not traversed consecutively.
\begin{lemma}\label{lem:ah-structured}
    Given an expanding search pattern~$\sigma_\alg$, a structured search pattern~$\sigma'_\alg$ can be computed in polynomial time such that~$L(\sigma'_\alg) \leq L(\sigma_\alg)$.
\end{lemma}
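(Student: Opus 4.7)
The plan is to prove the lemma in two stages: first eliminate redundant root edges, then reorder the remaining edges into consecutive per-copy blocks in order of non-decreasing tree length.

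In the first stage, observe that every edge with both endpoints in some copy $G_i$ has length at most $2$ (inherited from $G$, whose edge costs lie in $\{1,2\}$), whereas every root edge has length $a=2(|T|-1)\geq 2$. If $\sigma_\alg$ uses more than one root edge to enter $G_i$, let $(r,v_i)$ be the first such root edge in $\sigma_\alg$ and $(r,v)$ any subsequent one. At the moment $(r,v)$ is used, $v_i\in G_i$ is already explored, so I would replace $(r,v)$ by $(v_i,v)$, which exists since $G_i$ is complete and has length at most $2\leq a$. The replacement preserves expanding-search validity (the same new vertex $v$ is added) and weakly decreases the cost of the edge at that position, so no vertex's latency increases. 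After iterating this replacement, each visited copy has a unique root edge, and the edges of $\sigma_\alg$ within $G_i\cup\{r\}$ form a subtree $T_i$ of total length $c(T_i)$.

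In the second stage, I would sort the copies in non-decreasing order of $c(T_i)$ (Smith's rule) and concatenate them, traversing each $T_i$ in the order induced by $\sigma_\alg$. The resulting pattern $\sigma'_\alg$ is structured and computable in polynomial time. To bound its latency, I would decompose every $L_v^\alg$ with $v\in G_i^*$ as $L_v^\alg=L_v^{\text{intra}}+A_v$, where $L_v^{\text{intra}}$ is the cumulative cost of the $T_i$-edges up to $v$'s first-visit edge and $A_v$ is the cumulative cost of edges from other copies used before. Since the intra-copy order is preserved, $\sum_v w_v L_v^{\text{intra}}$ coincides in $\sigma_\alg$ and $\sigma'_\alg$. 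The inter-copy contribution in $\sigma'_\alg$ equals $\sum_l c(T_{(l)})(k-l)$, where $(1),\dots,(k)$ denotes the Smith ordering; this equals the minimum of $\sum_l c(T_l)(k-\pi^{-1}(l))$ over all permutations $\pi$, by the rearrangement inequality and the fact that each copy carries unit total terminal weight.

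The hard part is establishing $\sum_{v\in V^*} w_v A_v \geq \sum_l c(T_{(l)})(k-l)$, which then yields $L(\sigma'_\alg)\leq L(\sigma_\alg)$. My plan is an exchange argument exploiting that distinct copies are pairwise vertex-disjoint except at $r$: swapping two consecutive edges of $\sigma_\alg$ belonging to different copies preserves validity, since the new vertex added by one edge lies in its own copy and hence cannot be an endpoint of the other edge. By chaining such swaps (possibly at the block level, not just at the adjacent-edge level) $\sigma_\alg$ is transformed into a structured Smith-ordered pattern while controlling the aggregate inter-copy contribution. The main difficulty is orchestrating the swaps so that the global quantity $\sum_e c(e) R_e$, where $R_e$ is the total weight of unvisited terminals outside $e$'s copy at the moment $e$ is used, evolves monotonically; this seems to require either a block-level exchange step or a direct global accounting argument rather than a bubble-sort-style local analysis, because the change in latency from a single adjacent swap is not sign-definite.
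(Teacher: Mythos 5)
Your first stage (rerouting all but the first root edge of each copy through the already-explored entry vertex, using completeness and the bound $2\leq a$) is exactly the paper's first step and is fine. The gap is in your second stage, and it is not just a missing detail: the inequality you reduce everything to, namely $\sum_{v\in V^*} w_v A_v \geq \sum_l c(T_{(l)})(k-l)$ with the \emph{same} trees $T_i$, is false for general interleavings, so no structure-agnostic exchange or rearrangement argument can prove it. Concretely, take two copies whose exploration each splits into a block of length $\eps$ carrying weight $\nicefrac12$ followed by a block of length $c$ carrying weight $\nicefrac12$, interleaved as cheap$_1$, cheap$_2$, expensive$_1$, expensive$_2$: the inter-copy contribution is $\tfrac12(3\eps+c)$, whereas any structured ordering of the same two trees pays $\eps+c$, so for small $\eps$ the interleaved pattern strictly beats every structured reordering (intra-copy contributions being identical). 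This is precisely why ``the change in latency from a single adjacent swap is not sign-definite,'' as you observe; the difficulty you flag at the end is the actual content of the lemma, and you have not supplied it.

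What makes the statement true is the specific cost structure of the reduction instance, which your stage 2 never invokes. The paper partitions $\sigma_\alg$ into maximal blocks alternating between ``edges of $G_i$'' and ``other edges,'' defines the ratio $\ell(\hat\sigma)/t(\hat\sigma)$ of each block, and proves three claims: every block may be assumed to connect at least one terminal; the \emph{opening} block of a copy has ratio at least $2$, because it contains the root edge of cost $a=2(|T|-1)$ and can reach at most $|T|-1$ terminals; and the \emph{closing} block may be assumed to have ratio at most $2$, because any contiguous piece of ratio greater than $2$ can be replaced by direct cost-$2$ edges to the terminals it collects (using completeness and costs in $\{1,2\}$) --- note this last step changes the tree $T_i$, which your fixed-tree framework does not permit but the lemma allows. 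These two anchors ($r\geq 2$ at the front, $r\leq 2$ at the back) guarantee that a latency-non-increasing adjacent block swap always exists, so the blocks of each copy can be merged one swap at a time. To repair your proof you would need to import exactly these instance-specific ratio bounds; the Smith-rule ordering of whole copies is not where the difficulty lies and can be dropped.
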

\begin{proof}[Proof of \Cref{lem:ah-structured}]
Nothing is left to show if~$\sigma_\alg$ is structured, so we may assume that~$\sigma_\alg$ is not structured.
First, suppose that some copy~$G_i$ is connected to the root by more than one edge. Let~$e=(r,v)$ be the first edge that connects~$G_i$ to the root in~$\sigma_\alg$. Then, we can replace any other edge~$(r,w)$ with~$w\in V[G_i]$ by the edge~$(v,w)$ of length at most~$2\leq a$. 

Hence, if~$\sigma_\alg$ is not structured, we may assume that this is due to the existence of some copy~$G_i$ for which not all edges belonging to~$G_i$ or connecting~$G_i$ to the root are consecutive within~$\sigma_\alg$.
We write~$\sigma_\alg$ as a concatenation of (consecutive) subsequences
\begin{align*}
\sigma_\alg = \sigma_1 + \sigma_2 + \sigma_3 + \dots + \sigma_{2s} + \sigma_{2s+1} \ , 
\end{align*}
for some~$s > 1$
such that the subsequences with even index~$\sigma_2,\sigma_4,\dots,\sigma_{2s}$ are the inclusion-wise maximal subsequences of~$\sigma_\alg$ consisting only of edges belonging to~$G_i$ or connecting~$G_i$ to the root in the order as they appear in~$\sigma_\alg$.
The subsequences~$\sigma_1,\sigma_3,\dots,\sigma_{2s+1}$ with odd index are the inclusion-wise maximal subsequences of the remaining edges where we allow that~$\sigma_1$ and~$\sigma_{2s+1}$ are empty, but all other subsequences with odd index are non-empty. 
For some subsequence~$\hat{\sigma}$ of~$\sigma_\alg$ we denote by~$\ell(\hat{\sigma})=\sum_{e\in \hat{\sigma}}\ell_e$ the length of that subsequence and by~$t(\hat{\sigma})$ the number of terminal vertices that~$\hat{\sigma}$ connects to the rooted tree that previous subsequences have already explored. 
Then, we can define the \emph{ratio} of a subsequence as~$r(\hat{\sigma})= \ell(\hat{\sigma})/t(\hat{\sigma})$.

\begin{claim}\label{cl:t>1}
    Without loss of generality, we can assume that~$t(\sigma_j)\geq 1$ for all values~$j\in\{2,3,\dots,2s\}$.
\end{claim}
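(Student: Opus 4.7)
The plan is to iteratively modify $\sigma_\alg$ into a pattern of no greater cost whose decomposition satisfies $t(\sigma_j) \geq 1$ for all $j \in \{2, \dots, 2s\}$. Given any such $j$ with $t(\sigma_j) = 0$, one modification step strictly decreases $s$, so after at most $s$ iterations the process terminates. Throughout, I may keep $i$ fixed, since the decomposition is defined relative to the chosen copy $G_i$.

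The key observation is that $\sigma_j$ and $\sigma_{j+1}$ explore ``opposite sides'' of the decomposition with respect to $G_i$: whichever of the two is $G_i$-related only newly explores vertices inside $V[G_i]$, while the other only newly explores vertices outside $V[G_i]\cup\{r\}$, the root being always explored from the start. If $j < 2s$, I would swap $\sigma_j$ and $\sigma_{j+1}$, yielding the pattern $\sigma_1 + \dots + \sigma_{j-1} + \sigma_{j+1} + \sigma_j + \sigma_{j+2} + \dots + \sigma_{2s+1}$. Validity follows from this observation: the block now at position~$j$ uses only endpoints whose exploration is already established by $\sigma_1 + \dots + \sigma_{j-1}$, and the block now at position~$j+1$ uses only endpoints on the ``other side'' whose availability is unaffected by the block inserted in front of it. For the cost, terminals explored within $\sigma_1, \dots, \sigma_{j-1}$ and $\sigma_{j+2}, \dots, \sigma_{2s+1}$ retain their latency, the vertices newly explored by $\sigma_j$ contribute nothing since $t(\sigma_j) = 0$, and the terminals newly explored by $\sigma_{j+1}$ have their latency reduced by $\ell(\sigma_j)$. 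After the swap, $\sigma_{j-1}$ and the old $\sigma_{j+1}$ are consecutive blocks of the same type, as are the old $\sigma_j$ and $\sigma_{j+2}$; merging these two pairs decreases $s$ by $1$.

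The case $j = 2s$ needs a small variation since $\sigma_{j+2}$ does not exist. Here I would simply delete $\sigma_{2s}$ from the pattern. By the same opposite-sides reasoning, $\sigma_{2s+1}$ does not use any vertex first explored by $\sigma_{2s}$, so the resulting pattern remains a valid expanding search pattern and still visits every weighted vertex (the vertices newly explored by $\sigma_{2s}$ are all of weight zero since $t(\sigma_{2s})=0$), while the terminals in $\sigma_{2s+1}$ have their latency reduced by $\ell(\sigma_{2s})$. Merging the now-adjacent $\sigma_{2s-1}$ and $\sigma_{2s+1}$ again decreases $s$ by $1$. The main obstacle is to verify carefully that the swap and the deletion preserve validity of the expanding search pattern, but once the opposite-sides observation is established this is essentially immediate. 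Iterating the two operations yields, in polynomial time, an expanding search pattern of cost at most $L(\sigma_\alg)$ for which $t(\sigma_j)\geq 1$ holds for all $j\in\{2,\dots,2s\}$, as claimed.
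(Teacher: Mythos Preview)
Your proof is correct and follows the same swap-and-merge idea as the paper's proof. You are in fact more careful than the paper: you justify explicitly why the swap preserves feasibility (the ``opposite sides'' observation), and you treat the boundary case $j=2s$ separately by deleting $\sigma_{2s}$ rather than swapping it past a possibly empty $\sigma_{2s+1}$, which makes the decrease in $s$ unambiguous.
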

\begin{proof}[Proof of the claim]
    We assume that there exists a value~$j\in\{2,3,\dots,2s\}$ such that~$t(\sigma_j)=0$.
    Then, we can swap the positions of~$\sigma_j$ and~$\sigma_{j+1}$ and continue with the newly obtained expanding search sequence with fewer subsequences. By this, we only improve the total latency since no exploration of any terminal vertex has been postponed.
    \renewcommand{\qedsymbol}{$\triangleleft$}
\end{proof}

With \Cref{cl:t>1} the ratio~$r(\sigma_j)$ is well defined for all~$j\in\{2,3,\dots,2s\}$. 

\begin{claim}\label{cl:r>=2}
    We have~$r(\sigma_2)\geq 2$.
\end{claim}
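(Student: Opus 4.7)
My plan is to lower-bound $\ell(\sigma_2)$ and then combine with an upper bound on $t(\sigma_2)$. First, by the exchange argument already carried out in the preceding paragraph of the proof of Lemma~\ref{lem:ah-structured}, I may assume that every copy $G_j$ is connected to $r$ by exactly one edge of $\sigma_\alg$. Since $\sigma_2$ is the first subsequence of $\sigma_\alg$ containing any edge incident to $V[G_i]$, this unique root edge $e_0 = (r,v_0)$ must lie in $\sigma_2$ and contributes cost $a = 2(|T|-1)$.

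Next I would count edges of $\sigma_2$. Because $\sigma_\alg$ is an expanding search pattern, every edge of $\sigma_2$ adds exactly one previously unexplored vertex, and all these new vertices lie in $V[G_i]$. Hence if $\sigma_2$ contains $k$ edges in total, it adds $k$ vertices of $V[G_i]$, at least $t(\sigma_2)$ of which are terminals, so $k \geq t(\sigma_2)$. The remaining $k-1$ edges of $\sigma_2$ (all of them lying inside $G_i$) have cost at least $1$, so
\[
\ell(\sigma_2) \;\geq\; a + (k-1) \;\geq\; 2(|T|-1) + t(\sigma_2) - 1.
\]

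Dividing by $t(\sigma_2)$ and using the trivial bound $t(\sigma_2) \leq |T|$, I obtain
\[
r(\sigma_2) \;=\; \frac{\ell(\sigma_2)}{t(\sigma_2)} \;\geq\; 1 + \frac{2|T|-3}{t(\sigma_2)} \;\geq\; 3 - \frac{3}{|T|},
\]
which is at least $2$ whenever $|T| \geq 3$.

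The only subtle point is the boundary case $|T| \leq 2$, where the bound $3 - 3/|T|$ falls short of $2$. I would handle it by observing that \textsc{ST}(1,2) is trivially solvable in polynomial time when $|T| \leq 2$ (take the cheapest $v$--$w$ path connecting the at most two terminals), so the approximation hardness of Theorem~\ref{thm:steiner12} is witnessed by instances with $|T| \geq 3$; this is the only regime needed for the reduction. The main (and rather mild) obstacle is thus to be careful that the counting argument is tight enough: using only the crude bounds $\ell(\sigma_2) \geq a$ or $\ell(\sigma_2) \geq k-1$ would not suffice, and it is essential to sum the two contributions and then exploit $k \geq t(\sigma_2)$.
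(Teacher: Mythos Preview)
Your argument is essentially the paper's: bound $\ell(\sigma_2)$ below by the root-edge cost $a$ plus one per remaining edge, bound $t(\sigma_2)$ above, and divide. The only difference is that you use the trivial bound $t(\sigma_2)\le |T|$, whereas the paper uses the sharper bound $t(\sigma_2)\le |T|-1$. This sharper bound is immediate from the context you are already in: since $s>1$, there is at least one further even subsequence $\sigma_{2l}$ with $l>1$, and by the preceding Claim (\Cref{cl:t>1}) it connects at least one terminal of $G_i$, so $\sigma_2$ cannot connect all $|T|$ of them. Plugging $t(\sigma_2)\le |T|-1$ into your own inequality $r(\sigma_2)\ge 1+(2|T|-3)/t(\sigma_2)$ gives
\[
r(\sigma_2)\;\ge\; 1+\frac{2|T|-3}{|T|-1}\;=\;3-\frac{1}{|T|-1}\;\ge\;2\quad\text{for all }|T|\ge 2,
\]
so the separate treatment of $|T|=2$ becomes unnecessary. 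Your workaround (restricting the reduction to $|T|\ge 3$) is legitimate for the purposes of the hardness theorem, but it proves a slightly weaker statement than the claim as written; the one-line sharpening above closes that gap and matches the paper's proof.
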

\begin{proof}[Proof of the claim]
    The statement follows from the fact that the very first edge in~$\sigma_2$ has length~$a=2(|T|-1)$ and \Cref{cl:t>1} saying that there exists at least one more subsequence~$\sigma_{2l}$ for some~$l>1$ that connects at least one terminal vertex.
    Therefore,~$\sigma_2$ can connect at most~$|T|-1$ terminal vertices for which at least~$|T|-1$ many edges are needed.
    Each of those edges has a length of~$1,2,$ or~$a$, where precisely one edge has length~$a$.
    Hence, the minimum possible ratio for~$\sigma_2$ is
\begin{align*}
\frac{a+(|T|-2)}{|T|-1}=3-\frac{1}{|T|-1}\geq 2 \ ,
\end{align*}
where we use~$|T|\geq 2$.
\renewcommand{\qedsymbol}{$\triangleleft$}
\end{proof}

\begin{claim}\label{cl:r<=2}
    Without loss of generality, we can assume that~$r(\sigma_{2s})\leq 2$.
\end{claim}
\begin{proof}[Proof of the claim]
    Assume that this was not the case, so~$r(\sigma_{2s})> 2$. Then, we can make the following changes to decrease that ratio without increasing the total latency of the initial expanding search pattern~$\sigma_\alg$. 
    Let~$\sigma_{2s}=(e_1,\dots,e_m)$. If~$e_m$ does not connect a terminal vertex, we can remove~$e_m$ from~$\sigma_\alg$.
    With~$\ell_{e_m}>0$, this decreases the ratio~$r(\sigma_{2s})$ and only decreases the latencies of any terminal vertices connected by edges in~$\sigma_{2s+1}$.    
    Hence, we may assume that~$e_m$ connects a terminal vertex.
    Now, choose the shortest subsequence~$\Bar{\sigma}=(e_p,\dots,e_{m-1},e_m)$ with~$p\in [m]$ such that~$r(\Bar{\sigma})>2$.
    This is well-defined since~$r(\sigma_{2s})>2$.
    Let~$\{e^*_1,\dots,e^*_q=e_m\}$ be the set of edges that connect a terminal vertex in the order as they appear in~$\Bar{\sigma}$.
    We claim that for any subsequence~$\Bar{\sigma}_j=(e_p,e_{p+1},\dots,e^*_j)$ with~$j\in [q]$ it holds that~$r(\Bar{\sigma}_j)>2$.
    Assume for contradiction that there exists some~$j\in [q]$ such that~$r(\Bar{\sigma}_j)\leq 2$.
    Then, let~$\Bar{\sigma}_{-j}$ be such that~$\Bar{\sigma}$ is a concatenation of~$\Bar{\sigma}_{j}$ and~$\Bar{\sigma}_{-j}$.
    However, since~$\Bar{\sigma}$ is the shortest contiguous subsequence of~$\sigma_{2s}$ that contains~$e_m$ such that~$r(\Bar{\sigma})>2$ holds, it follows that~$r(\Bar{\sigma}_{-j})\leq 2$, otherwise~$\Bar{\sigma}$ would not be minimal.
    In total, this yields
    \begin{align*}
        2
        < r(\Bar{\sigma})
        =\frac{\ell(\Bar{\sigma})}{t(\Bar{\sigma})}
        =\frac{\ell(\Bar{\sigma}_{j})+\ell(\Bar{\sigma}_{-j})}{t(\Bar{\sigma}_{j})+t(\Bar{\sigma}_{-j})}
        \leq \frac{2t(\Bar{\sigma}_{j})+2t(\Bar{\sigma}_{-j})}{t(\Bar{\sigma}_{j})+t(\Bar{\sigma}_{-j})}
        =2 \ ,
    \end{align*}
    a contradiction.
    Hence, for any subsequence~$\Bar{\sigma}_j=(e_p,\dots,e^*_j)$ with~$j\in [q]$, it holds that~$r(\Bar{\sigma}_j)>2$.
    Let~$Q$ denote the set of terminal vertices connected by~$\Bar{\sigma}$.
    We can remove all edges in~$\Bar{\sigma}$ and instead connect each vertex in~$Q$ with a single edge of length~$2$ in the same order as they had been connected in~$\Bar{\sigma}$. This strictly decreases the latency of all vertices in~$Q$ and, in particular, the latency of that terminal vertex, connected initially by~$e_m$.  
    Hence, all terminal vertices connected by some edges in~$\sigma_{2s+1}$ will also have a strictly smaller latency. Additionally, this strictly decreases the ratio~$r(\sigma_{2s})$, so we can repeat this procedure until~$r(\sigma_{2s})\leq 2$ as there are only finitely many values that this ratio can take.
    \renewcommand{\qedsymbol}{$\triangleleft$}
\end{proof}

With those three claims, we are now ready to prove \Cref{lem:ah-structured}.
    Consider the ratios~$r(\sigma_{2s-1})$ and~$r(\sigma_{2s})$ and distinguish two cases.
    First, assume~$r(\sigma_{2s-1})\geq r(\sigma_{2s})$. We then claim that swapping those two subsequences decreases the total latency. To this end, note that~$r(\sigma_{2s-1}) \geq r(\sigma_{2s})$ yields
    \begin{align}
        \frac{\ell(\sigma_{2s-1})}{t(\sigma_{2s-1})}
        &\geq \frac{\ell(\sigma_{2s})}{t(\sigma_{2s})} &&\Leftrightarrow  &
        \ell(\sigma_{2s-1}) t(\sigma_{2s})
        \geq \ell(\sigma_{2s}) t(\sigma_{2s-1}) \ . \label{eq:ah-ratio}
    \end{align}
    Swapping the positions of~$\sigma_{2s-1}$ and~$\sigma_{2s}$ causes the latency of~$t(\sigma_{2s})$ many terminal vertices to decrease by~$\ell(\sigma_{2s-1})$ while the latency of~$t(\sigma_{2s-1})$ many terminal vertices increase by~$\ell(\sigma_{2s})$. Note that the latencies of all terminal vertices connected by other subsequences remain unaffected.
    Hence, by Equivalence~(\ref{eq:ah-ratio}), the total latency of the expanding search pattern cannot increase.
    It remains to consider the second case, i.e.,~$r(\sigma_{2s-1})< r(\sigma_{2s})$, which yields~$r(\sigma_{2s-1})<2$.
    We then compare~$r(\sigma_{2s-1})$ to~$r(\sigma_{2s-2})$ and continue recursively with adjacent subsequences until we find the first pair~$\sigma_{j}$ and~$\sigma_{j-1}$ with~$j\in\{2,3,\dots,2s-2\}$ such that~$r(\sigma_{j})\geq r(\sigma_{j-1})$. This pair does exist since~$r(\sigma_{2s-1})<2$ and~$r(\sigma_2)\geq 2$.
    If this pair is found, those two subsequences can swap positions, and with an analogous computation to the one in the first case, the total latency does not increase.
    After swapping at most~$s\leq |V|$ pairs of subsequences in the same way, the desired property is established for~$G_i$.
    
    We repeat this process for each copy of~$G$ until the obtained expanding search sequence is structured. Computing the ratios and performing the swaps of the subsequences takes time polynomial in the length of the sequence; hence, this procedure has a polynomial running time.
\end{proof}

With \Cref{lem:ah-structured} at hand, we will assume from now on that~$\sigma_\alg$ is structured and that it visits the~$k$ copies of~$G$ in the order~$G_1,\dots,G_k$. Next, we define the algorithm~$\alg$ that uses the~$\beta$-approximation algorithm for the expanding search problem to obtain a solution for the original \textsc{ST}(1,2) instance~$I$.

\medskip
{\bf The Algorithm for \textsc{ST}(1,2).}
The algorithm~$\alg$ for \textsc{ST}(1,2) is defined as follows:
\begin{enumerate}[label=\arabic*),leftmargin=*]
    \item It takes an \textsc{ST}(1,2) instance~$I$ as input and constructs the corresponding expanding search problem instance~$I'$.
    \item Next, it runs the~$\beta$-approximation algorithm~$\alg'$ on~$I'$ to obtain the expanding search pattern~$\sigma_\alg$.
    \item Using~$\sigma_\alg$, the algorithm computes a corresponding structured expanding search pattern~$\sigma_\alg'$. As shown in \Cref{lem:ah-structured}, this step can be performed in polynomial time.
    \item For each copy~$G_i$,~$\alg'$ computes a Steiner tree solution~$T_i$ for the original instance~$I$ as a byproduct.
    Out of these~$k$ solutions,~$\alg$ selects the one with the minimum length, defined as~$T^* = \arg\min_{i \in [k]} \bigl\{\sum_{e \in E[T_i]} \ell_e \bigr\}$.
\end{enumerate}
The output of the algorithm is tree~$T^*$, i.e.,~$\alg(I) = \ell(T^*) = \sum_{e \in E[T^*]} \ell_e$.
\medskip

In the remainder of this section, we will denote the objective value of the algorithms~$\alg$ and~$\alg'$ on instances~$I$ and~$I'$ by~$\alg(I)$ and~$\alg'(I')$. Further, we will denote the optimal objective values for \textsc{ST}(1,2) and the expanding search problem on instances~$I$ and~$I'$ by~$\opt_\text{ST}(I)$ and~$\opt_\text{ESP}(I')$, respectively.

First, we establish the upper bound on~$\alg(I)$. To do this, let~$T_1, \dots, T_k$ represent the Steiner tree solutions that~$\alg'$ computes for the~$k$ copies~$G_1, \dots, G_k$ of the instance~$I'$.
The upper bound is derived by assuming that the expanding search pattern~$\sigma_\alg$ collects the entire weight of~$1$ for each copy~$G_i$ upon visiting the very first vertex of that copy.
This yields
\begin{align*}
    \alg'(I')
    &\geq \sum_{i=1}^k \left(ia+ \sum_{j=1}^{i-1} \ell(T_j) \right)\\
    &\geq \sum_{i=1}^k \left(ia+ \sum_{j=1}^{i-1} \ell(T^*) \right)\\
    &=  \frac{k(k+1)}{2} a + \frac{(k-1)k}{2} \ell(T^*)
\end{align*}
which is equivalent to
\begin{align}
    \alg(I)=\ell(T^*)
    &\leq \frac{2}{(k-1)k}\left( \alg'(I') - \frac{k(k+1)}{2} a \right) \ . \label{eq:ah-upper-bound-alg}
\end{align}

Next, we provide an upper bound on~$\opt_\text{ESP}(I')$. To do so, consider an optimal Steiner tree solution for the instance~$I$.

Using this optimal solution, we construct an expanding search problem solution as follows:
\begin{enumerate}[label=\arabic*),leftmargin=*]
    \item Start by visiting an arbitrary vertex~$v$ in~$G_1$.
    \item Then, traverse all edges of the optimal Steiner tree solution for~$I$.
    \item Repeat the same process for the remaining copies~$G_2, \dots, G_k$.
\end{enumerate}
By assuming that the total weight of~$1$ for each copy is only collected upon visiting the very last terminal vertex of each copy, we obtain the following upper bound on~$\opt_\text{ESP}(I')$
\begin{align}
    \opt_\text{ESP}(I')
    \leq \sum_{i=1}^k i(a+\opt_\text{ST}(I))
    =\frac{k(k+1)}{2} (a+\opt_\text{ST}(I)) \ . \label{eq:ah-upper-bound-opt}
\end{align}
Combining \Cref{eq:ah-upper-bound-alg,eq:ah-upper-bound-opt} with~$\alg'(I')\leq \beta \opt_\text{ESP}(I')$, yields
\begin{align}
    \alg(I)
    &\leq \frac{2}{(k-1)k}\left( \alg'(I') - \frac{k(k+1)}{2} a \right) \notag \\
    &\leq \frac{2}{(k-1)k}\left( \beta \opt_\text{ESP}(I') - \frac{k(k+1)}{2} a \right) \notag \\
    &\leq \frac{2}{(k-1)k}\Biggl( \beta \biggl(\frac{k(k+1)}{2} \bigl(a+\opt_\text{ST}(I)\bigr)\biggr) - \frac{k(k+1)}{2} a \Biggr) \notag\\
    &=\frac{k+1}{k-1}\left( a(\beta-1)+\beta \opt_{\text{ST}}(I)\right). \notag \\
    \intertext{With~$\beta-1>0$ and \Cref{eq:ah-bound-a}, we finally obtain }
    \alg(I)
    &\leq \frac{k+1}{k-1}(3\beta -2)\opt_{\text{ST}}(I) \ .
\end{align}

Thus, applying the~$\beta$-approximation algorithm for the expanding search problem results in a~$\gamma$-approximation algorithm for \textsc{ST}(1,2), where~$\gamma = \frac{k+1}{k-1}(3\beta - 2)$.
However, by selecting~$\beta$ and~$k$ such that~$\beta < 1 + \smash{\frac{\rho}{3}}$ and~$k > \smash{\frac{\rho + 3\beta - 1}{\rho - 3\beta + 3}}$, we achieve~$\gamma < 1 + \rho$ contradicting the approximation hardness of \textsc{ST}(1,2).
This establishes that there exists a constant~$\eps > 0$ such that no polynomial-time~$(1 + \eps)$-approximation algorithm exists for the expanding search problem unless~$\mathsf{P} = \mathsf{NP}$. This completes the proof of \Cref{thm:hardness}.

\medskip
\noindent\textbf{Acknowledgements.}
We thank Spyros Angelopoulos for fruitful discussions and pointers to earlier literature.

\bibliographystyle{siamplain}
\bibliography{references}

\end{document}